\newenvironment{citemize}{
\begin{list}{\labelitemi}{\leftmargin=1.5em}
  \setlength{\itemsep}{1.5pt}
  \setlength{\parskip}{0pt}
  \setlength{\parsep}{0pt}}{\end{list}
}
\DeclareMathAlphabet{\mathpzc}{OT1}{pzc}{m}{it}
\newtheorem{theorem}{Theorem}[section]
\newcommand{\Real}{\mathbb{R}}
\newcommand{\dims}{\mathpzc{D}}
\newcommand{\dist}{\mathbf{d}}
\newcommand{\xvec}{\mathbf{x}}
\newcommand{\ip}[2]{\ensuremath{\langle #1, #2 \rangle}}
\newcommand{\norm}[1]{\ensuremath{\left\| #1 \right\|}}
\newcommand{\normsq}[1]{\ensuremath{\left\| #1 \right\|^2}}
\newcommand{\twonorm}[1]{\ensuremath{\left\| #1 \right\|_2}}
\newcommand{\twonormsq}[1]{\ensuremath{\left\| #1 \right\|_2^2}}
\newcommand{\ball}[2]{\ensuremath{\mathcal{B}_{#1}^{#2}}}
\newcommand{\cone}[2]{\ensuremath{\mathcal{C}_{#1}^{#2}}}
\begin{document}

\title{Maximum Inner-Product Search using Tree Data-structures}

\author{Parikshit Ram \and Alexander G. Gray}
\maketitle

\begin{abstract}
The problem of {\em efficiently} finding the best match for a query in a given set with respect to the Euclidean distance or the cosine similarity has been extensively studied in literature. However, a closely related problem of efficiently finding the best match with respect to the inner product has never been explored in the general setting to the best of our knowledge. In this paper we consider this general problem and contrast it with the existing best-match algorithms. First, we propose a general branch-and-bound algorithm using a tree data structure. Subsequently, we present a dual-tree algorithm for the case where there are multiple queries. Finally we present a new data structure for increasing the efficiency of the dual-tree algorithm. These branch-and-bound algorithms involve novel bounds suited for the purpose of best-matching with inner products. We evaluate our proposed algorithms on a variety of data sets from various applications, and exhibit up to five orders of magnitude improvement in query time over the naive search technique.
\end{abstract}

\section{Introduction}
\label{sec:intro}
%
In this paper, we consider the problem of {\em efficiently} finding the best-match for a query from a given set of points with respect to the inner-product similarity. Formally, we consider the following problem:

\noindent
\textbf{Problem.} For a given set of $N$ points $S \subset \Real^\dims$ and a query $q\in \Real^\dims$, efficiently find a point $p\in S$ such that:
\begin{equation} \label{eq:maxip}
\ip{q}{p} = \max_{r \in S} \ip{q}{r}.
\end{equation}
We call this the problem of {\em maximum inner-product search}. The focus of this paper is to improve the efficiency of this search. An alternate formulation of the above problem in terms of a vector and matrix multiplication is as following: 

\noindent
\textbf{Problem.} For a given vector $w \in \Real^\dims$ and a matrix $M \in \Real^{\dims \times N}$, efficiently compute the following: 
\begin{equation} \label{eq:maxip_alt}
\norm{w^T M}_\infty = \max (w^T M).
\end{equation}

This problem appears to be very similar much existing work in literature. Efficiently finding the best match with respect to the Euclidean (or more generally $L_p$) distance is the widely studied problem of fast nearest-neighbor search in metric spaces \cite{clarkson2006nearest}. Efficient retrieval of the best match with respect to the cosine similarity is the extensively researched in the field of  text mining and information retrieval \cite{bayardo2007scaling}. But as we will explain in the next section, the maximum inner-product search is not only different from these aforementioned tasks, but also arguably harder.
\subsection{Applications}
%
%
An obvious application of maximum inner-product search stems out of the widely successful matrix-factorization framework in recommender system challenges like the ``Netflix prize'' \cite{koren2009matrix, koren2009bellkor, bell2007lessons}. The matrix-factorization task obtains accurate representation of the available data in terms of user vectors and items vectors (examples for items would be movies or music). In this setting, the preference of a user for an item is the inner-product between the corresponding user's vector and the item's vector. The efficient retrieval of recommendations for a user is equivalent to the problem in equation \ref{eq:maxip} with the user as the query and the items as the reference set. For the challenges, linear scan of the items are usually employed to find the best recommendations. 
An efficient search algorithm would make the retrieval of recommendations in the matrix-factorization framework scalable to real world systems.

The usual document retrieval tasks use the cosine-similarity to match documents. However, in certain setting \cite{deerwester1990indexing}, the documents are represented as (not necessarily normalized) vectors and the inner-product between these vectors represent the similarity between the documents. In this case, unless the vectors are normalized to have the same length, document matching using the cosine-similarity \cite{bayardo2007scaling} might make the algorithm scalable at the cost of returning inaccurate solutions since the inner-product is not the same as the cosine-similarity (we will explain this more elaborately in the section \ref{sec:related}). 

There is a similar problem known as the the max-kernel operation: for a given set of points $S$ and a query $q$ and a kernel-function $\mathcal{K}(\cdot,\cdot)$, the task is to find the point $p\in S$ with the maximum value of $\mathcal{K}(q,p)$ over the set $S$. This problem is widely used in maximum-a-posteriori inference \cite{klaas2005fast} in machine learning, and for the task of image matching \cite{kulis2009kernelized} in computer vision. If the kernel function can be explicitly represented in the form a function $\varphi(\cdot)$ such that $\mathcal{K}(q,p) = \ip{\varphi(q)}{\varphi(r)}$, then this problem reduces to the problem in equation \ref{eq:maxip} after all the points in the set $S$ and the query $q$ is transformed into the $\varphi$-space.
%
\subsection{This Paper}
In this paper, we consider the general problem of efficient maximum inner-product search and propose two tree-based branch-and-bound algorithms along with a new data structure to solve this problem more efficiently that the naive linear scan. In the following section, we contrast this problem to the usual problems of nearest-neighbor search in metric spaces and best-matches with respect to cosine-similarity. This presents the need for explicit attention to this problem (equation \ref{eq:maxip}). However, we do motivate the use of the existing tree data structures for solving this task efficiently. In section \ref{sec:ref_ball}, we propose a simple branch-and-bound algorithm using the existing ball-tree data structure \cite{omohundro1989five} and a novel bound. In the following section (section \ref{sec:dual_tree}), we address the situation where there are multiple queries on the same set of points and propose a dual-tree branch-and-bound algorithm along with a new tree data structure, {\em cone trees}, to index the queries. The proposed algorithms are evaluated for their efficiency over a variety of data sets in section \ref{sec:expts}. Section \ref{sec:max_kernel} demonstrates how the proposed algorithms can be applied to the max-kernel operation with a general kernel function where it is not required to have an explicit representation of the points in the $\varphi$-space. In the final section (section \ref{sec:conclusions}), we provide our conclusions along with possible future directions for this work.  
\section{Maximum Inner-product Search} \label{sec:related}
%
%
The inner-product between two vectors is very closely related to the Euclidean distance between the points represented by these vectors as well as to the cosine-similarity between these to vectors. Numerous techniques exists for nearest-neighbor search in Euclidean metric space (see surveys like \cite{clarkson2006nearest}). Large scale best matching algorithms have also been developed for the cosine-similarity measure \cite{bayardo2007scaling}, with a lot of focus on text data. The problem of nearest-neighbor search (in metric space) has also been solved approximately with the widely popular {\em Locality-sensitive hashing} (LSH) method \cite{gionis1999similarity,indyk1998approximate}. The LSH technique has also been extended to other forms of similarity functions (as opposed to the distance as a dissimilarity function) like the cosine similarity \cite{charikar2002similarity}\footnote{An important thing to note here is that the similarity function used in Charikar et.al.\cite{charikar2002similarity} is not exactly the cosine-similarity. The distance between two points $p$ and $q$ was measured by $\theta / \pi$, where $\theta$ is the angle made the two points at the origin, making the similarity function $\left( 1 - \frac{\theta}{\pi} \right)$. This similarity function has a direct correspondence to the cosine similarity.}. The approximate max-kernel operations can also be solved efficiently with LSH under certain conditions on the kernel function. Some other techniques like dimension reduction \cite{rahimi2007random} and dual-tree algorithms \cite{klaas2005fast} have also been used to solve the approximate max-kernel operation efficiently. 

\begin{figure}[ht]
\centering
\includegraphics[width=0.6\columnwidth,clip=true,trim= 1.0in 3.7in 1.0in 1.7in]{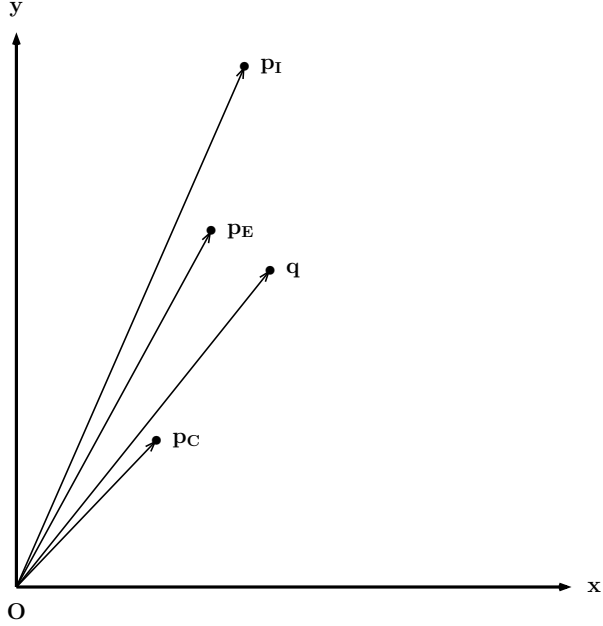}
\label{fig:best_matches}
\caption{\textbf{Best matches:} For a given query $q$, $p_C$, $p_E$ and $p_I$ denote the best match with respect to the Cosine-similarity, the Euclidean distance and the Inner-product respectively. It is apparent from this figure that even on a plane, the best match with respect to these similarity functions can be very different.}
\end{figure}
\subsection{How is maximum inner-product search different from existing problems?}
In what follows, we will try to show that the problem stated in equation \ref{eq:maxip} is different from these existing problems. Hence techniques applied to these problems (like LSH) cannot be directly applied to this problem.
\paragraph{Nearest-neighbor Search in Euclidean Space.}
The problem of finding the nearest-neighbor in this setting can be posed as finding a point $p\in S$ for a query $q$ such that:
\begin{eqnarray}
p  &  = & \arg \min_{r\in S} \twonormsq{q-r} 
= \arg \max_{r \in S} \left( \ip{q}{r} - \frac{\twonormsq{r}}{2} \right ) \nonumber \\
& \not=  & \arg \max_{r\in S} \ip{q}{r} \mbox{ (unless $\twonormsq{r} = k\ \forall\ r \in S$)}. \nonumber
\end{eqnarray}
Hence, if the norms of all the points in $S$ are normalized to have the same length, then the problem of finding the best match with respect to the inner-product is equivalent to the problem of finding the nearest-neighbor in Euclidean metric space. However, without this restriction, the two problems can have potentially very different answers (figure \ref{fig:best_matches}). 
\paragraph{Best-matching with Cosine-similarity.} The problem of finding the best match with respect to the cosine-similarity can be posed as finding a point $p \in S$ for a query $q$ such that 
\begin{eqnarray}
p & = & \arg \max_{r \in S} \frac{\ip{q}{r}}{\norm{q}\norm{r}} 
   =  \arg \max_{r \in S} \frac{\ip{q}{r}}{\norm{r}}  \nonumber \\
 & \not=  & \arg \max_{r\in S} \ip{q}{r} \mbox{ (unless $\norm{r} = k\ \forall\ r \in S$)}. \nonumber
\end{eqnarray}
As in the previous case, the best match with cosine similarity is the best match with inner-products if all the points in the set $S$ are normalized to have the same length. Under general conditions, the best matches with these two similarity functions can be very different (see figure \ref{fig:best_matches}). 
\paragraph{Locality-sensitive Hashing.}
LSH has been applied to a wide variety of similarity functions. LSH involves constructing hashing functions such that each hash function $h$ must satisfy the following for any pair of points $r,p\in S$: 
\begin{equation}
\label{eq:lsh}
\Pr[h(r) = h(p)] = \mbox{sim}(r,p),
\end{equation}
where $\mbox{sim}(r,p) \in [0,1]$ is the similarity function of interest. For our situation, we can normalize our data set such that $\forall \ r\in S, \norm{r} \leq 1$\footnote{This normalization is different than the normalization mentioned before where all the points were normalized to have the same length. Here the lengths are normalized to be less than equal to one, but not equal to each other.}, and assume that the all the data is in the first quadrant (so that none of the inner-products go below zero). In that case, $\mbox{sim}(r,p) = \ip{r}{p} \in [0,1]$ is a valid similarity function of interest. 

It is known that for any similarity function to admit a locality sensitive hash function family (as defined in equation \ref{eq:lsh}), the distance function $\dist(r,p) = 1 - \mbox{sim}(r,p)$ must satisfy the triangle inequality (Lemma 1 in \cite{charikar2002similarity}). However, the distance function $\dist(r,p) = 1 - \ip{r}{p}$ does not satisfy the triangle inequality (even when all the points are restricted to the first quadrant)\footnote{Consider the following counter example: Let $x,y,z \in S$ be points such that $\norm{x} = \norm{y} = \norm{z} = 1$, and angles made between $x\ \&\ y$, $y\ \&\ z$ and $z\ \&\ x$ at the origin are $\left(\frac{\pi}{4} - 0.1 \right)$, $\frac{\pi}{4}$ and $\left( \frac{\pi}{2} - 0.3\right)$ respectively. In this case the inequality, $\dist(x,y) + \dist(y,z) \geq \dist(z,x)$ does not hold for $d(\cdot,\cdot) = 1 - \ip{\cdot}{\cdot}$. $\dist(x,y) = 0.226,\ \dist(y,z) = 0.293\ \&\ \dist(z,x) = 0.704$.}. So LSH cannot be applied to the inner product similarity function even when we assume that all the data lies in the first quadrant (which is quite a restrictive assumption).
\paragraph{Efficient Max-kernel Operation.}
There are different existing techniques of solving this problem efficiently. For kernel functions with very high (possibly infinite) dimensional explicit representations, Rahimi, et.al., 2007 \cite{rahimi2007random}, propose a technique to transform these high-dimensional representations into lower-dimensions while still approximately preserving the inner-product to improve scalability. However, the final search still involves a linear scan over the set of points for the maximum inner-product or a fast nearest-neighbor search under the assumption that finding the nearest-neighbor is equivalent to maximizing the inner-product. For translation invariant kernels\footnote{Kernel functions $\mathcal{K}(p,q)$ which are dependent only the (Euclidean) distance between the points $p$ and $q$ are considered translation invariant kernels. The Gaussian RBF kernel is such a translation invariant kernel function.}, a tree-based recursive algorithm has been shown to scale to large sets \cite{klaas2005fast}. However, it is not clear how this algorithm can be extended to the general class of kernels. LSH is widely used for image matching in computer vision \cite{kulis2009kernelized}, but only for kernel functions that admit a locality sensitive hashing function \cite{charikar2002similarity}.

Hence, none of the existing techniques can be directly applied to our problem (equation \ref{eq:maxip}) without introducing inaccurate results or limiting assumptions.

\subsection{Why is maximum inner-product search possibly harder?}
Unlike the distance functions in metric space, inner products do not induce any form of triangle inequality (even under some assumptions as mentioned in the previous section). Moreover, this lack of any induced triangle inequality causes the similarity function induced by the inner products to have no admissible family of locality sensitive hashing functions. And any modification to the similarity function to conform to widely used similarity functions (like Euclidean distance or Cosine-similarity) will create inaccurate results. 

Moreover, inner-products lack a very basic property of generally used similarity functions -- the self similarity is high (generally the highest). For example, the Euclidean distance of a point to itself is 0; the cosine-similarity of a point to itself is 1. The inner-product of a point $x$ to itself is $\norm{x}^2$, which may be high or low depending on the value of the $\norm{x}$. Moreover, there can possibly be many other points like $y$ in the set such that $\ip{y}{x} > \norm{x}^2$. 

Hence, without any assumptions, this problem of obtaining the best match with respect to the maximum inner product is inherently harder than the previously dealt similar problems. This is possibly the reason why there is no existing work for this problem without any restrictions on the domain (at least to the best of our knowledge).  

\subsection{Are trees the answer?}
In this paper, we explore the tree data structure for indexing the points and a branch-and-bound algorithm specifically for the task of maximum inner-product search. Tree data structures have been widely used for the task of nearest-neighbor search \cite{friedman1977algorithm, beygelzimer2006cover, preparata1985computational}. And even though the task of nearest-neighbor search is slightly different from the task of maximum inner-product search, we believe that trees can still be useful for this task. 

Trees are known to be good indexing schemes in low to medium dimensions, while some new tree data structures have been developed for data in high dimensions with some low dimensional structure \cite{dasgupta2008random, beygelzimer2006cover}. A hierarchical representation of the data is useful. In this paper, we try to solve the problem of exact maximum inner-product search. The hierarchical tree data structure provides a very intuitive extension to solve the problem approximately to gain efficiency \cite{ciaccia2000pac, ram2009rank}. 

Moreover, if the search is to be performed with strict constraints -- error constraints or time constraints, the tree-based branch-and-bound algorithms can be easily adapted for that purpose. This is because these branch-and-bound algorithms are incremental algorithms. This is not possible with something like LSH -- LSH provides theoretical error bounds, but there is no way of ensuring the error constraint during the search. Moreover, LSH is inherently not an incremental algorithm, and hence cannot be used in a limited time setting.

An important advantage of trees is that the trees require a single construction -- the branch-and-bound algorithm adapts for the different levels of approximate and/or time limitations. Hashing techniques require multiple hashes for different levels of approximation. The usual norm is to pre-hash for multiple values of approximation. Trees can also be constructed by learning from the data using techniques from machine learning \cite{cayton2007learning, li2011Learning} to provide better accuracy and efficiency.

This is why we use trees to solve the problem. Trees might not be the best possible way to solve this problem, but trees do bring a lot of advantages with them.
\section{Tree-based Search}
\label{sec:ref_ball}
Ball trees~\cite{preparata1985computational, omohundro1989five} are binary space-partitioning trees that have been widely used for the task of indexing data sets. Every node in the tree represents a set of points and each node is subsequently indexed with a center and a ball enclosing all the points in the node. The set of point at a node is divided into two disjoint sets which form the child nodes. This partitions the space into (possibly overlapping) hyper-spheres. The tree is built hierarchically and a node is declared to be a leaf node if it contains a set of points of size below a threshold value $N_0$. 
\begin{figure}[ht]
\centering
\includegraphics[width=0.6\columnwidth,clip=true,trim= 2.0in 7.2in 2.0in 1.3in]{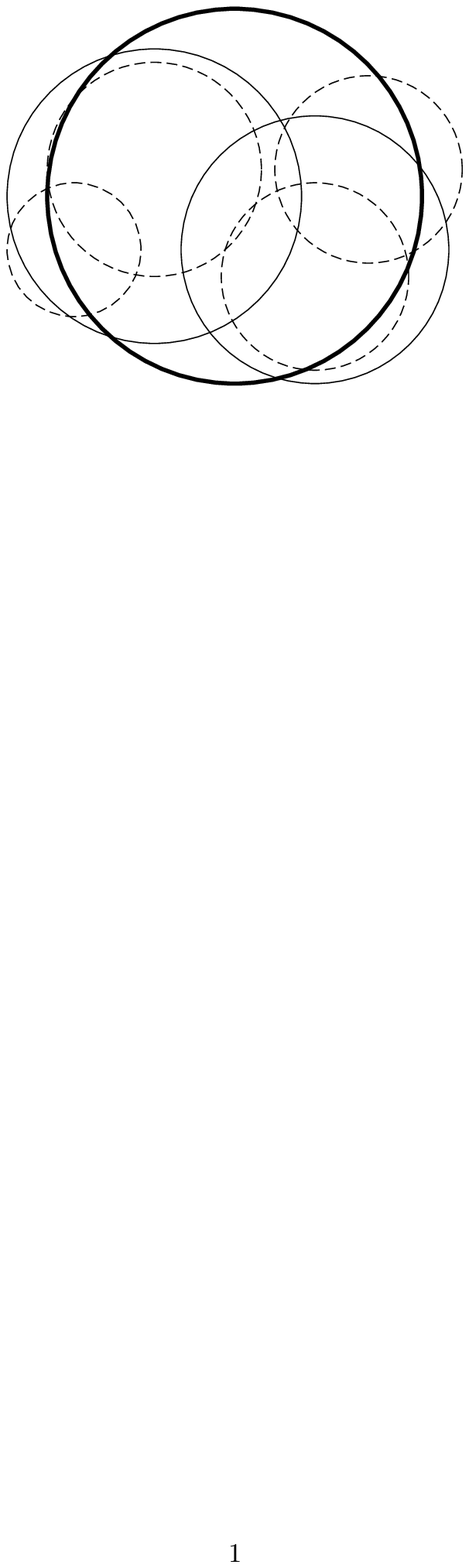}
\caption{\textbf{Ball-trees:} All the points are limited within the root ball (the bold-face circle). However, the subsequent balls does not necessarily lie within the parent ball -- the points still lie within the root ball, but the ball enclosing the points in the child node are not necessarily compact enough to lie within the parent ball. However, the child node would be confined within the parent node if we used hyper-rectangles instead of balls to index the data.}
\label{fig:ball-tree}
\end{figure}
\subsection{Tree Construction}
We use a simple ball tree construction heuristic that approximately picks a pair of pivot points which are farthest apart from each other \cite{omohundro1989five}, and splits the data by assigning the points to their closest pivot. The intuition behind this heuristic is that these two points might lie in the principal direction. The splitting and the recursive tree construction algorithm is presented in Algorithms \ref{alg:ball-splits} \& \ref{alg:ball-tree-construction} for completeness. 

The tree is very space efficient since every node only stores the indices of the item vectors instead of the item vectors themselves. Hence the matrix for the items is never duplicated. Another implementation optimization is that the vectors in the items' matrix are sorted in place (during the tree construction) such that all the items in the same leaf node are arranged serially in the matrix. This is to avoid any random access to the memory when accessing items in the same leaf node.
\begin{figure}[!htb]
\centering
 \begin{algorithm}[H]
{\small
 \caption{MakeBallTreeSplit(Data $S$)}
 \begin{algorithmic}
 \label{alg:ball-splits}
 \STATE Pick a random point $\xvec \in S$
 \STATE $A \leftarrow \arg\max_{\xvec' \in S} \twonormsq{\xvec - \xvec'}$
 \STATE $B \leftarrow \arg\max_{\xvec' \in S} \twonormsq{A - \xvec'}$
 \STATE return $(A,B)$
 \end{algorithmic}
}
\end{algorithm}
 \begin{algorithm}[H]
{\small
 \caption{MakeBallTree(Set of items $S$)}
 \begin{algorithmic}
 \label{alg:ball-tree-construction}
 \STATE Input -- Set $S$
 \STATE Output -- Tree $T$
 \STATE $T.S \leftarrow S$
 \STATE $T.\mu \leftarrow \mbox{mean}(S)$
 \STATE $T.R \leftarrow \max_{p \in S} \twonormsq{p - T.\mu}$
 \IF { $|S| \leq N_0$ }
    \STATE \textit{// Leaf node}
    \STATE return $T$
 \ELSE{}
    \STATE \textit{// else split the set}
    \STATE $(A,B) \leftarrow \mbox{MakeBallTreeSplit}(S)$
    \STATE $S_l \leftarrow \{p \in S \colon \twonormsq{p - A} \leq \twonormsq{p - B}\}$
    \STATE $S_r \leftarrow S \setminus S_l$
    \STATE $T.\mbox{lc} \leftarrow \mbox{MakeBallTree}(S_l)$
    \STATE $T.\mbox{rc} \leftarrow \mbox{MakeBallTree}(S_r)$
    \STATE return $T$
 \ENDIF
 \end{algorithmic}
}
\end{algorithm}
\caption{\textbf{Ball-tree Construction:} The object $T.S$ denotes the set of points in the node $T$. $T.\mu$ denotes the Euclidean mean of the items in the node $T$ and $T.R$ denotes the minimum radius of the ball centered around $T.\mu$ enclosing all the points in the node $T$. $T.\mbox{lc}$ and $T.\mbox{rc}$ denotes the left and right child of the tree node $T$.}
\label{fig:ball-tree-construction}
\end{figure}
\subsection{Branch-and-bound algorithm}
Ball trees are widely used for the task of nearest neighbor search and are known to be fairly scalable to moderately high dimensions \cite{omohundro1989five,liu2004investigation}.
The search usually employs the depth-first branch-and-bound algorithm. A nearest neighbor query is answered by traversing the tree in a depth-first manner by first going down the node closer to the query and bounding the minimum possible distance to the other branch with the triangle-inequality. If this bound is greater than the current neighbor candidate for the query, the branch is removed from computation.

An analogous greedy depth-first algorithm can be used for maximum inner-product search. But instead of traversing down the node closer to the query, the choice is made on the basis of the maximum possible inner-product between the query and any potential point from the node. The recursive depth-first branch and bound algorithm is presented in Algorithm \ref{alg:rec-single-search}. The search algorithm for a query ($q$) begins at the root of the tree (Alg. \ref{alg:single-tree-search}). At each step, the algorithm is at a tree node ($T$). It checks if the maximum possible inner-product between the query and any point in the node, $\mathbf{MIP}(q,T)$,  is any better than the current best-match for the query ($q.\mbox{bm}$). If the check fails, this branch of the tree is not explored any more. Otherwise, the algorithm recursively traverses the tree, exploring the branch with the better potential candidates in a depth-first manner. If the node is a leaf, the algorithm just finds the best-match within the leaf with a linear search (Alg. \ref{alg:linear-search}). This algorithm ensures that the exact solution (i.e., the best-match) is returned by the end of the algorithm.
%
\begin{figure}[!htb]
\centering
 \begin{algorithm}[H]
 \caption{LinearSearch(Query $q$, Reference Set $S$)}
 \label{alg:linear-search}
{\small
 \begin{algorithmic}
 \FOR{ each $p\in S$ }
    \IF{ $\ip{q}{p} > q.\lambda$ }
       \STATE $q.\mbox{bm} \leftarrow p$
       \STATE $q.\lambda \leftarrow \ip{q}{p}$
    \ENDIF
 \ENDFOR
 \end{algorithmic}
}
 \end{algorithm}
 \begin{algorithm}[H]
 \caption{TreeSearch(Query $q$, Tree Node $T$)}
{\small
 \begin{algorithmic}
 \label{alg:rec-single-search}
 \IF {$q.\lambda <  \mathbf{MIP}(q, T)$}
 \STATE \textit{// This node has potential}
 \IF {\textit{isLeaf}(T)}
    \STATE LinearSearch($q$, $T.S$)
 \ELSE {}
    \STATE \textit{// best depth first traversal}
    \STATE $I_l \leftarrow  \mathbf{MIP}(q, T.\mbox{lc})$; $I_r \leftarrow \mathbf{MIP}(q, T.\mbox{rc})$;
    \IF {$I_l \leq I_r$}
        \STATE TreeSearch$(q, T.\mbox{rc})$; TreeSearch$(q, T.\mbox{lc})$;
    \ELSE {}
        \STATE TreeSearch$(q, T.\mbox{lc})$; TreeSearch$(q, T.\mbox{rc})$;
    \ENDIF
 \ENDIF
 \ENDIF
 \STATE \textit{// Else the node is pruned from computation}
 \STATE return;
 \end{algorithmic}
}
\end{algorithm}
 \begin{algorithm}[H]
{\small
 \caption{FindExactMaxIP(Query set $V$, Reference Set $S$)}
 \begin{algorithmic}
 \label{alg:single-tree-search}
 \STATE $T \leftarrow $ MakeBallTree($S$)
 \FOR{ each $q \in V$ }
   \STATE $q.\lambda \leftarrow -\infty$;
   \STATE $q.\mbox{bm} \leftarrow \emptyset$;
   \STATE TreeSearch($q$, $T$);
   \STATE return $q.\mbox{bm}$;
 \ENDFOR
 \end{algorithmic}
}
\end{algorithm}
\caption{\textbf{Single-tree Search}: The object $q.\mbox{bm}$ contains the current best-match candidate for the query and $q.\lambda$ denotes the inner-product between the query $q$ and its current best-match $q.\mbox{bm}$. The function `$\mathbf{MIP}(q,T) = \ip{q}{T.\mu} + \norm{q} T.R$' denotes the upper bound on the maximum possible inner-product between the query $q$ and any point lying in the tree node $T$.}
\label{fig:single-ball-tree-search}
\end{figure}
\subsubsection{Bounding maximum inner-product with a ball}
Since the triangle inequality does not hold for the inner product, we present an novel analytical upper bound for the maximum possible inner product of a given point (in this case, the query $q$) with points in a ball. It is important to note that the information about the ball is limited to its center and its radius. For the rest of this section, we use the notation $\norm{\cdot}$ to denote the $\twonorm{\cdot}$. 

\begin{figure}[t]
\centering
\includegraphics[width=0.6\columnwidth,clip=true,trim= 1in 3.7in 1in 1.7in]{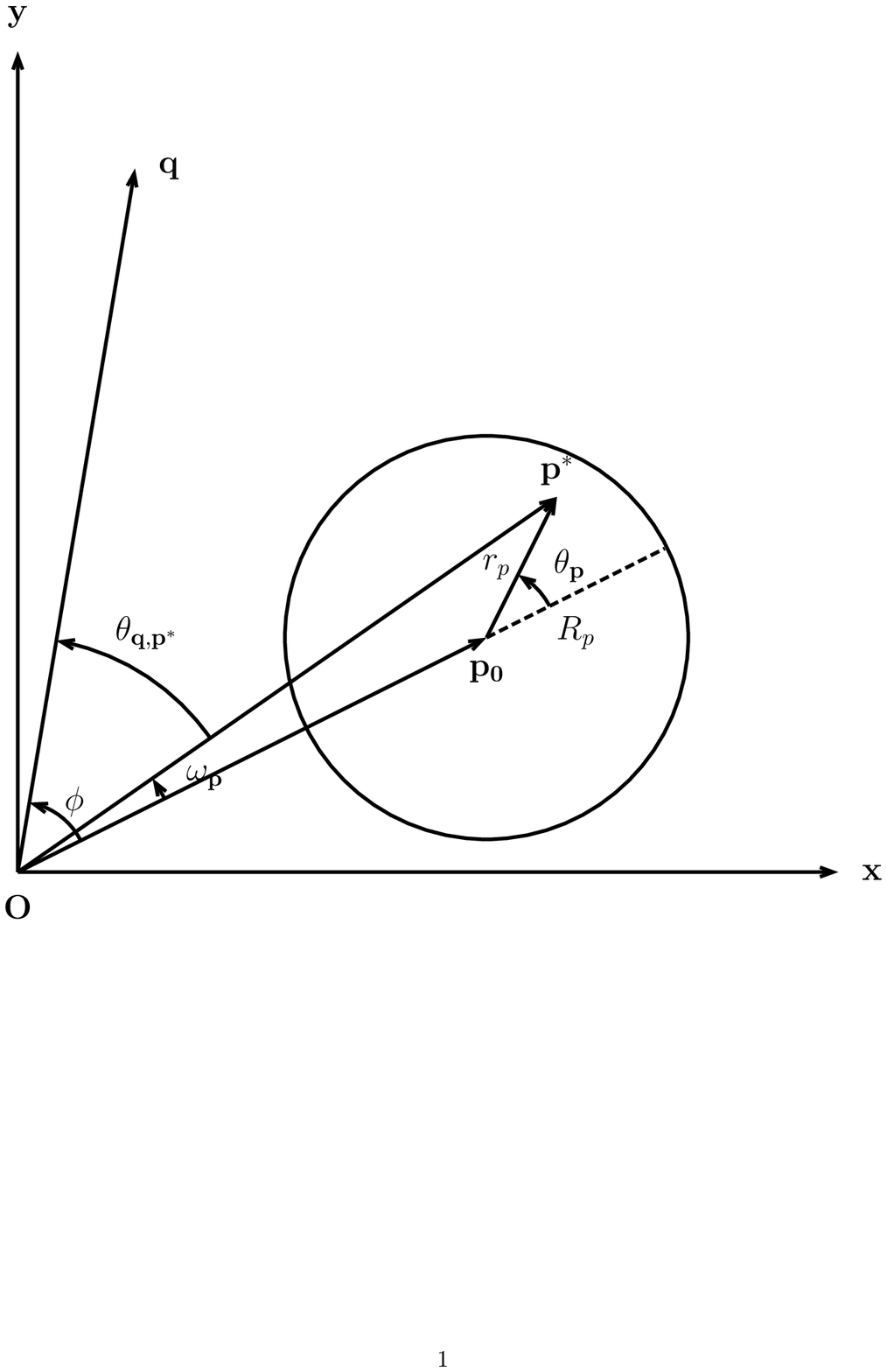}
\caption{Bounding with a ball}
\label{fig:ball_bound}
\end{figure}
\begin{theorem}
\label{thm:single-ball-bnd} 
Given a ball $\ball{p_0}{R_p}$ of points centered at $p_0$ with radius $R_p$ and (query) point $q$, the maximum possible inner product between the point $q$ and the ball $\ball{p_0}{R_p}$ is bounded from above by:
\begin{equation} \label{eq:mip_ball_bnd}
\max_{p \in \ball{p_0}{R_p}} \ip{q}{p} \leq \ip{q}{p_0} + R_p \norm{q}.
\end{equation}
\end{theorem}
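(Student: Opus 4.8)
The plan is to reduce the bound to a direct application of the Cauchy--Schwarz inequality after decomposing an arbitrary point of the ball relative to its center. First I would observe that every point $p \in \ball{p_0}{R_p}$ can be written as $p = p_0 + v$ for some displacement vector $v$ with $\norm{v} \leq R_p$; this is just the definition of the closed ball of radius $R_p$ centered at $p_0$. This parametrization converts the geometric constraint ``$p$ lies in the ball'' into the single scalar constraint $\norm{v} \leq R_p$, which is exactly the form needed to invoke Cauchy--Schwarz cleanly.

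Next, using the bilinearity of the inner product, I would split $\ip{q}{p} = \ip{q}{p_0 + v} = \ip{q}{p_0} + \ip{q}{v}$. The first term is fixed (independent of the choice of $p$ within the ball), so all the variation lives in the second term $\ip{q}{v}$. Bounding this term with Cauchy--Schwarz gives $\ip{q}{v} \leq \norm{q}\,\norm{v}$, and the ball constraint $\norm{v} \leq R_p$ then yields $\ip{q}{v} \leq R_p \norm{q}$. Combining the two pieces gives $\ip{q}{p} \leq \ip{q}{p_0} + R_p \norm{q}$ for every $p$ in the ball, and since the right-hand side does not depend on $p$, taking the maximum over $p \in \ball{p_0}{R_p}$ preserves the inequality and establishes \eqref{eq:mip_ball_bnd}.

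There is no genuine obstacle here: the argument is essentially a one-line consequence of Cauchy--Schwarz, and the only thing worth tracking carefully is the bookkeeping of which quantities are fixed versus variable, together with the fact that the maximum over the compact ball is the correct object to bound. If I wanted to additionally argue that the bound is the tightest achievable given only the center and radius, I would verify equality by choosing the displacement $v = R_p \, q / \norm{q}$ (assuming $q \neq 0$): this choice points in the same direction as $q$, so it simultaneously saturates the Cauchy--Schwarz step and the norm constraint, and the resulting point $p = p_0 + R_p \, q / \norm{q}$ lies on the boundary of the ball. Hence the stated upper bound is attained, which shows it cannot be improved without using more information about the point set than the enclosing ball provides.
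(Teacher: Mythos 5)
Your proof is correct, and it takes a genuinely different and considerably more economical route than the paper's. The paper proves Theorem \ref{thm:single-ball-bnd} trigonometrically: it takes the maximizer $p^*$, introduces the angle $\theta_p$ between $\vec{p_0}$ and $\vec{p_0 p^*}$ and the angles $\phi, \omega_p$ at the origin, expresses $\norm{p^*}$, $\cos\omega_p$, $\sin\omega_p$ in terms of $\norm{p_0}$, $r_p$, $\theta_p$, invokes the triangle inequality for angles to get $\cos\theta_{q,p^*} \leq \cos(\phi - \omega_p)$, and then maximizes $\cos\phi\,(\norm{p_0} + r_p\cos\theta_p) + \sin\phi\,(r_p\sin\theta_p)$ over $\theta_p$ (optimum at $\theta_p = \phi$) before using $r_p \leq R_p$. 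Your decomposition $p = p_0 + v$ with $\norm{v} \leq R_p$, followed by bilinearity and Cauchy--Schwarz, collapses all of this into one line, and it buys several things: it is coordinate-free and dimension-free, it transfers verbatim to the kernelized setting of Section \ref{sec:max_kernel} where only $\mathcal{K}$-evaluations are available, and your tightness observation via $v = R_p\, q/\norm{q}$ (with the trivial case $q = 0$ handled separately, both sides being zero) shows the bound is not merely an upper bound but exactly $\max_{\norm{v} \leq R_p} \ip{q}{p_0 + v}$ --- a point the paper never makes. Indeed, your argument also yields the two-ball bound of Theorem \ref{thm:ball-ball-bound} in one line, by expanding $\ip{q_0 + u}{p_0 + v}$ into four terms and applying Cauchy--Schwarz to three of them. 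The paper's heavier angular parametrization is not wasted effort in context, however: it is the template that carries over to the cone-ball bound of Theorem \ref{thm:cone-ball-bnd}, where the query constraint is angular rather than a norm bound and Cauchy--Schwarz alone does not capture it, and to the appendix's tightened bounds, which optimize jointly over $\theta_p, \theta_q$ precisely because the independent term-by-term maximization implicit in the Cauchy--Schwarz route is loose for two balls.
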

\begin{proof}
Suppose that $p^*$ is the best possible match in the ball $\ball{p_0}{R_p}$ for the query $q$ and $r_p$ be the Euclidean distance between the ball center $p_0$ and $p^*$ (by definition, $r_p \leq R_p$). Let $\theta_p$ be the angle between the vector $\vec{p_0}$ and the vector $\vec{p_0 p^*}$, $\phi$ and $\omega_p$ be the angles made at the origin between the vector $\vec{p_0}$ and vectors $\vec{q}$ and $\vec{p^*}$ respectively (see figure \ref{fig:ball_bound}). The length of $p^*$ in terms of $p_0$ and $\theta_p$  is:
\begin{equation}
\label{eq:best_p_norm}
\norm{p^*} = \sqrt{ (\norm{p_0} + r_p \cos \theta_p)^2 + (r_p \sin \theta_p)^2 }.
\end{equation}
The angle $\omega_p$ can be expressed in terms of $p_0$ and $\theta_p$ as: 
\begin{equation}
\label{eq:best_p_angles}
\cos \omega_p = \frac{\norm{p_0} + r_p \cos \theta_p}{\norm{p^*}}, \sin \omega_p = \frac{r_p \sin \theta_p}{\norm{p^*}}.
\end{equation}
Let $\theta_{q,p^*}$ be the angle between the vectors $\vec{q}$ and $\vec{p^*}$. With the triangle inequality of angles, we have: 
$$|\theta_{q,p^*}| \geq |\phi - \omega_p|.$$
Assuming that the angles lie in the range $[-\pi,\pi]$ (instead of the usual $[0, 2\pi]$), and the fact that $\cos(\theta) = \cos(-\theta)$, we get:
\begin{equation}
\label{eq:m_angle_bounds}
\cos \theta_{q,p^*} \leq \cos (\phi - \omega_p),
\end{equation}
since $\cos (\cdot)$ is monotonically decreasing in the range $[0,\pi]$. Using this inequality we obtain the following bound for the highest possible inner-product between $q$ and any point in the ball:
\begin{eqnarray*}
\max_{p \in \ball{p_0}{R_p}} \ip{q}{p} & =  & \ip{q}{p^*} \mbox{(by assumption)} \\
 & = & \norm{q}\norm{p^*} \cos \theta_{q, p^*} \\
 & \leq & \norm{q}\norm{p^*} \cos (\phi - \omega_p),
\end{eqnarray*}
where the last inequality follows from equation \ref{eq:m_angle_bounds}. Substituting equations \ref{eq:best_p_norm} \& \ref{eq:best_p_angles} in the above inequality, we have
\begin{eqnarray*}
\max_{p \in \ball{p_0}{R_p}} \ip{q}{p} & \leq & \norm{q} \left( \cos \phi (\norm{p_0} + r_p \cos \theta_p)  + \sin \phi (r_p \sin \theta_p) \right) \\
 & \leq & \norm{q} \max_{\theta_p} \left( \cos \phi (\norm{p_0} + r_p \cos \theta_p) + \sin \phi (r_p \sin \theta_p) \right) \\
 & = & ||q|| \left( \cos \phi (||p_0|| + r_p \cos \phi)  + \sin \phi (r_p \sin \phi) \right) \\
 & \leq & \norm{q} \left( \cos \phi (\norm{p_0} + R_p \cos \phi) + \sin \phi (R_p \sin \phi) \right)\ \ (\mbox{since } r_p \leq R_p).\\
\end{eqnarray*}
The second inequality comes from the definition of maximum. The following equality comes from maximizing over $\theta_p$. This gives us the optimal value of $\theta_p = \phi$. Simplifying the final inequality gives us equation \ref{eq:mip_ball_bnd}.
\end{proof}
For the tree-search algorithm (Alg. \ref{alg:rec-single-search}), we set the maximum possible inner-product between $q$ and a tree node $T$ as $$\textbf{MIP}(q,T) = \ip{q}{T.\mu} + T.R \norm{q}.$$ This upper bound can be computed in almost the same time required for a single inner-product (since the norms of the queries can be pre-computed before searching the tree). This algorithm is evaluated against the naive linear search algorithm in section \ref{sec:expts}.

\section{Dual-tree based Search}
\label{sec:dual_tree}
For a set of queries, the tree can be traversed separately for each query. However, if the set of queries is very large, a common technique to improve efficiency of querying is to index the queries in the form of a tree as well. The search is then subsequently done by traversing both trees simultaneously using the {\em dual-tree} algorithm \cite{gray2000nbody}. The basic idea is to amortize the cost of tree-traversal for a set of queries which are very similar to each other and would follow (approximately) the same path down the tree.  The dual-tree algorithms have been applied to different tree-based algorithms like nearest-neighbor search \cite{gray2000nbody} and kernel density estimation \cite{gray2003nonparametric} with provable theoretical runtime bounds \cite{ram2009linear}. 

\subsection{Dual-tree Branch-and-bound Algorithm}
The generic dual-tree algorithm is presented in Algorithm~\ref{alg:dual-rec-search}. Similar to the Algorithm \ref{alg:rec-single-search}, the algorithm traverses down the tree on the reference set $S$ (referred to as the {\em RTree}). However, the algorithm also traverses down the tree on the set $V$ of queries ({\em QTree}), resulting in a four-way recursion. At each step, the algorithm is at a QTree node $Q$ and a RTree node T. For every $Q$, the value $Q.\lambda$ denotes the minimum inner-product between any query in $Q$ and its current best-match candidate. If this value is greater than the maximum possible inner product, $\mathbf{MIP}(Q,T)$, between any query in $Q$ and any reference point in $T$, this part of the recursion is no longer explored. When the algorithm is at the leaf level of both the trees, it obtains the best-matches for each query in the QTree leaf by doing a linear scan over the RTree leaf. 

\begin{figure}[!htb]
\centering
 \begin{algorithm}[H]
{\small
 \caption{DualSearch(QTree Node $Q$, RTree Node $T$)}
 \begin{algorithmic}
 \label{alg:dual-rec-search}
 \IF {$Q.\lambda <  \mathbf{MIP}(Q,T)$}
  \STATE \textit{// This node has potential}
  \IF {\textit{isLeaf}($T$) \& \textit{isLeaf}($Q$)}
    \FOR {each  $q \in Q.S$ }
       \STATE LinearSearch($q, T.S$)
    \ENDFOR
    \STATE $Q.\lambda \leftarrow \min_{q\in Q.S} q.\lambda$
  \ELSIF {\textit{isLeaf}($T$)}
    \STATE DualSearch$(Q.\mbox{lc}, T)$; DualSearch$(Q.\mbox{rc}, T)$;
    \STATE $Q.\lambda \leftarrow \min\{Q.\mbox{lc}.\lambda, Q.\mbox{rc}.\lambda \}$
  \ELSIF {\textit{isLeaf}($Q$)}
    \STATE $I_l \leftarrow  \mathbf{MIP}(Q, T.\mbox{lc}) $; $I_r \leftarrow \mathbf{MIP}(Q, T.\mbox{rc}) $;
    \IF {$I_l \leq I_r$}
      \STATE DualSearch$(Q, T.\mbox{rc})$; DualSearch$(Q, T.\mbox{lc})$;
    \ELSE {}
      \STATE DualSearch$(Q, T.\mbox{lc})$; DualSearch$(Q, T.\mbox{rc})$;
    \ENDIF
 \ELSE{}
    \STATE \textit{// best depth first traversal}
    \STATE $I_l \leftarrow  \mathbf{MIP}(Q.\mbox{lc}, T.\mbox{lc}) $; $I_r \leftarrow \mathbf{MIP}(Q.\mbox{lc}, T.\mbox{rc}) $;
    \IF {$I_l \leq I_r$}
      \STATE DualSearch$(Q.\mbox{lc}, T.\mbox{rc})$; DualSearch$(Q.\mbox{lc}, T.\mbox{lc})$;
    \ELSE {}
      \STATE DualSearch$(Q.\mbox{lc}, T.\mbox{lc})$; DualSearch$(Q.\mbox{lc}, T.\mbox{rc})$;
    \ENDIF
    \STATE $I_l \leftarrow  \mathbf{MIP}(Q.\mbox{rc}, T.\mbox{lc}) $; $I_r \leftarrow \mathbf{MIP}(Q.\mbox{rc}, T.\mbox{rc}) $;
    \IF {$I_l \leq I_r$}
      \STATE DualSearch$(Q.\mbox{rc}, T.\mbox{rc})$; DualSearch$(Q.\mbox{rc}, T.\mbox{lc})$;
    \ELSE {}
      \STATE DualSearch$(Q.\mbox{rc}, T.\mbox{lc})$; DualSearch$(Q.\mbox{rc}, T.\mbox{rc})$;
    \ENDIF
    \STATE $Q.\lambda \leftarrow \min\{Q.\mbox{lc}.\lambda, Q.\mbox{rc}.\lambda \}$
 \ENDIF
 \ENDIF
 \STATE \textit{// Else the node is pruned from computation}
 \end{algorithmic}
}
\end{algorithm}
\begin{algorithm}[H]
{\small
 \caption{FindExactMaxIPDualTree(Query Set $V$, Reference Set $S$)}
 \begin{algorithmic}
 \label{alg:dual-exact-search}
 \STATE $T \leftarrow $ MakeBallTree($S$)
 \STATE $Q \leftarrow $ MakeQueryTree($V$)
 \STATE $\forall$ trees nodes $Q'$ in the tree $Q$, $Q'.\lambda \leftarrow -\infty$;
 \STATE $\forall$ queries $q \in V$, $q.\mbox{bm} \leftarrow \emptyset$, $q.\lambda \leftarrow -\infty$;
 \STATE DualSearch($Q$, $T$);
 \STATE $\forall$ queries $q \in V$, return $q.\mbox{bm}$;
 \end{algorithmic}
}
\end{algorithm}
\caption{\textbf{Dual-tree Search}: The tree-building subroutine for the set of queries ``\textit{MakeQueryTree}'' can be the ``\textit{MakeBallTree}'' subroutine (Alg. \ref{alg:ball-tree-construction}) or the ``\textit{MakeConeTree}'' subroutine (Alg. \ref{alg:cone-tree-construction}). The object $q.\mbox{bm}$ contains the current best-match for the query $q$. $Q.\lambda$ denotes the lowest affinity between any query in the node $Q$ and its current best-match. The function $\mathbf{MIP}(Q,T)$ denotes the upper bound on the maximum possible inner-product between any query in the node $Q$ and any point in the node $T$.}
\end{figure}

In this section, we explore two ways of indexing the queries -- (1) indexing the queries using the ball-tree (2) indexing the queries using a novel data structure, the {\em cone-tree}. In the following subsections, we derive expressions for $\mathbf{MIP}(Q,T)$ each of these kinds of QTree.
\subsection{Ball Tree for Queries}
\begin{figure}[t]
\centering
\includegraphics[width=0.6\columnwidth,clip=true,trim= 1in 3.7in 1in 1.7in]{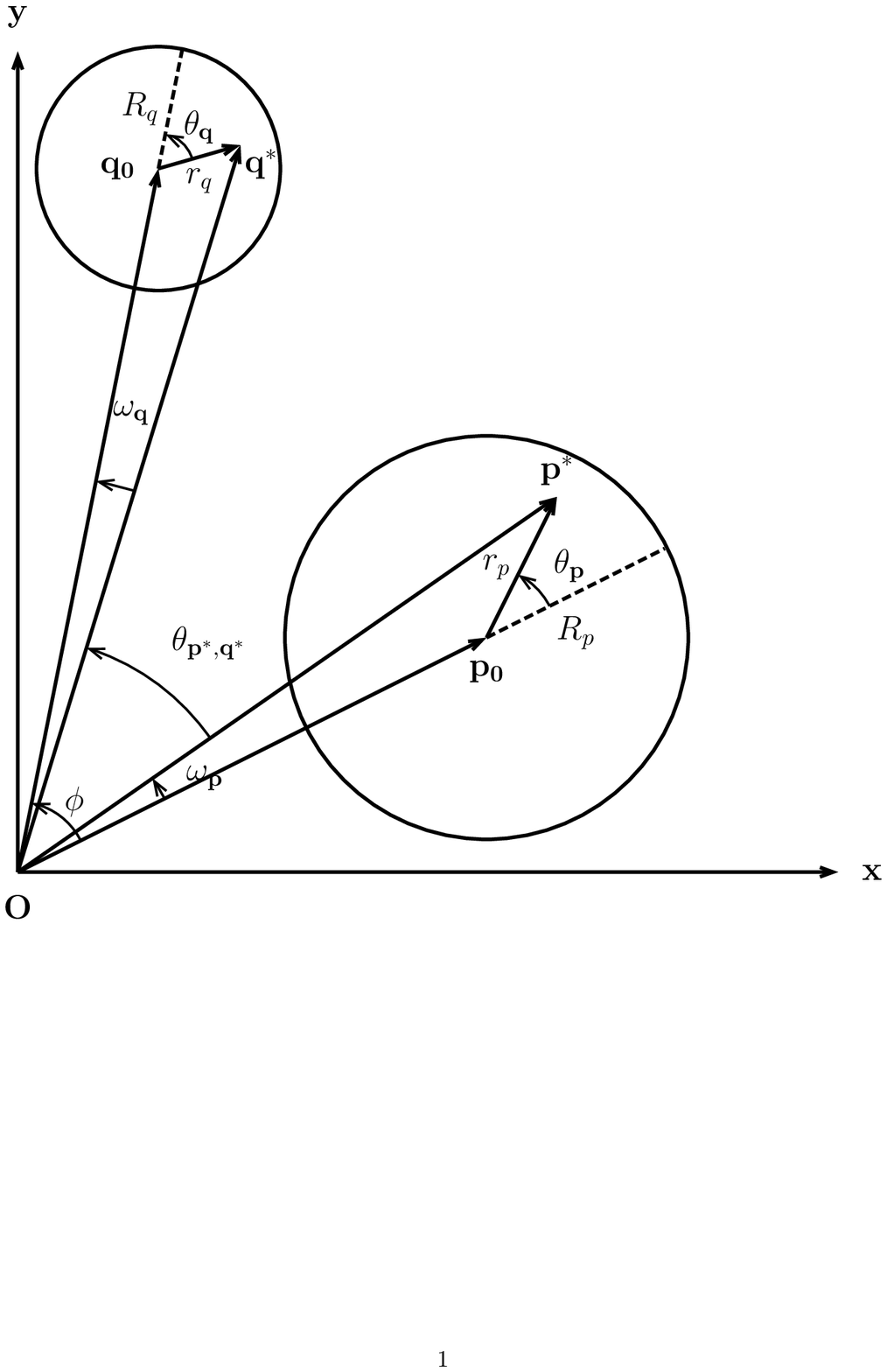}
\caption{Bounding between two balls.}
\label{fig:ball_ball_bound}
\end{figure}

\begin{theorem} \label{thm:ball-ball-bound}
Given two balls $\ball{p_0}{R_p}$ and $\ball{q_0}{R_q}$ centered at $p_0$ and $q_0$ with radius $R_p$ and $R_q$ respectively, the maximum possible inner-product with any pair of points $p \in \ball{p_0}{R_p}$ and $q \in \ball{q_0}{R_q}$ is bounded from above by: 
\begin{equation} \label{eq:mip_ball_ball_bnd}
\max_{p \in B_{p_0}^{R_p}, q \in B_{q_0}^{R_q}} \ip{q}{p} \leq  \ip{q_0}{p_0} + R_q R_p + \norm{q_0} R_p + \norm{p_0} R_q.
\end{equation}
\end{theorem}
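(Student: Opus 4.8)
The plan is to bootstrap Theorem \ref{thm:single-ball-bnd}, which already bounds the inner product of a single fixed point against a ball, and to apply it twice rather than redo the geometric angle-chasing argument (with $\theta_p$, $\omega_p$, $\phi$) for two balls simultaneously. The idea is to peel off the two balls one at a time, reducing the two-ball problem to two instances of the single-ball problem already solved.

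First I would fix an arbitrary query point $q \in \ball{q_0}{R_q}$ and treat it as the ``query'' in Theorem \ref{thm:single-ball-bnd} against the reference ball $\ball{p_0}{R_p}$. This immediately gives
$$\max_{p \in \ball{p_0}{R_p}} \ip{q}{p} \leq \ip{q}{p_0} + R_p \norm{q},$$
a bound that now depends only on $q$. It then remains to maximize this right-hand side over all $q \in \ball{q_0}{R_q}$, which is the crux of the argument.

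The key step is to separate this maximization into its two $q$-dependent terms. Since $\max_q\left( f(q) + g(q) \right) \leq \max_q f(q) + \max_q g(q)$, I would bound the two pieces independently. For the linear term $\ip{q}{p_0}$ I apply Theorem \ref{thm:single-ball-bnd} a second time, now with $p_0$ playing the role of the query and $\ball{q_0}{R_q}$ the reference ball, yielding $\ip{q}{p_0} \leq \ip{q_0}{p_0} + R_q \norm{p_0}$; for the norm term I use the triangle inequality $\norm{q} \leq \norm{q_0} + R_q$, valid since any $q$ in the ball is $q_0$ plus a perturbation of length at most $R_q$. Adding these two bounds and substituting back produces exactly
$$\ip{q_0}{p_0} + R_q \norm{p_0} + R_p \norm{q_0} + R_p R_q,$$
which matches equation \ref{eq:mip_ball_ball_bnd}.

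The main thing to be careful about is precisely this separation step: the two terms $\ip{q}{p_0}$ and $R_p \norm{q}$ may be maximized by different points of the query ball, so the split must be justified by subadditivity of the maximum rather than by positing a single common maximizer. Everything else is a direct reuse of the single-ball bound together with the triangle inequality, so no fresh geometric computation over both balls' radial and angular parameters is needed. I expect this two-step reduction to be substantially cleaner than optimizing over both balls at once, while still yielding precisely the stated constant.
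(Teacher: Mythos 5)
Your proof is correct, and it takes a genuinely different route from the paper. The paper redoes the full geometric argument from scratch for two balls at once: it parametrizes the maximizing pair $(p^*, q^*)$ by radial offsets $r_p, r_q$ and angles $\theta_p, \theta_q, \omega_p, \omega_q$, invokes the triangle inequality of angles $|\theta_{q^*,p^*}| \geq |\phi - (\omega_p + \omega_q)|$, rewrites the inner product as $\ip{q_0}{p_0} + r_p r_q \cos(\phi-(\theta_p+\theta_q)) + r_p\norm{q_0}\cos(\phi-\theta_p) + r_q\norm{p_0}\cos(\phi-\theta_q)$, and then bounds each cosine by $1$ and each radius by $R_p$ or $R_q$. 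You instead bootstrap Theorem \ref{thm:single-ball-bnd} twice: fix $q$, apply the single-ball bound to get $\ip{q}{p_0} + R_p\norm{q}$, then split the maximization over $q$ via subadditivity of the max, handle the linear term $\ip{q}{p_0}$ by a second application of the single-ball bound (with $p_0$ as the query), and handle $\norm{q}$ by the triangle inequality $\norm{q} \leq \norm{q_0} + R_q$. All three steps are valid and you correctly flag the one point needing care, namely that the two terms may have different maximizers, which is exactly why subadditivity rather than a common maximizer is the right justification; the result matches equation \ref{eq:mip_ball_ball_bnd} term for term. Your argument is shorter, avoids the angle-chasing entirely, and makes the paper's remark that the bound reduces to Theorem \ref{thm:single-ball-bnd} when $R_q = 0$ structurally obvious. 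What the paper's longer derivation buys is the explicit four-variable form in terms of $r_p, r_q, \theta_p, \theta_q$ (its equation preceding the $\cos(\cdot)\leq 1$ step), which is precisely the expression the Appendix later re-optimizes over $\theta_p$ and $\theta_q$ to obtain tighter bounds; your decoupled argument hides those degrees of freedom and so could not serve as the starting point for that refinement.
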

\begin{proof}
Consider the pair of point $(p^*, q^*), p^* \in B_{p_0}^{R_p}, q^* \in B_{q_0}^{R_q}$ be such that 
\begin{equation}
\label{eq:max_ball_ball}
\ip{q^*}{p^*} = \max_{p \in B_{p_0}^{R_p}, q \in B_{q_0}^{R_q}} \ip{q}{p}.
\end{equation}

Let $\theta_p$ be the angle $\vec{p_0}$ makes with the vector $\vec{p_0 p^*}$, and $\theta_q$ be the corresponding angle in the query ball. Let $\omega_p$ be the angle between the vectors $\vec{p_0}$ and $\vec{p^*}$ and $\omega_q$ be the angle between the vectors $\vec{q_0}$ and $\vec{q^*}$. Let $r_p$ be the distance between $p_0$ and $p^*$, $r_q$ be the distance between $q_0$ and $q^*$. Finally, let $\phi$ be the angle made between $p_0$ and $q_0$ at the origin. 

Some facts for the ball $\ball{p_0}{R_p}$ (the facts are analogous for the ball $\ball{q_0}{R_q}$):
$$\norm{p^*} = \sqrt{\norm{p_0}^2 + r_p^2 + 2\norm{p_0} r_p \cos\theta_p},$$ 
$$ \cos \omega_p = \frac{\norm{p_0} + r_p \cos \theta_p}{\norm{p^*}}, \sin \omega_p = \frac{r_p \sin\theta_p}{\norm{p^*}}.$$
Using the triangle inequality of the angles, we know that:
\begin{equation*}
|\theta_{q^*, p^*}| \geq |\phi - (\omega_p + \omega_q)|, 
\end{equation*}
giving us the following:
\begin{equation}
\label{eq:max_omega} \ip{q^*}{p^*}  = \norm{p^*}\norm{q^*} \cos (\phi - (\omega_p+\omega_q)) \\
\end{equation}
Replacing $\omega_p$ and $\omega_q$ with $\theta_p$ and $\theta_q$ by using the aforementioned equalities (similar to the techniques in proof for theorem \ref{thm:single-ball-bnd}), we have:
\begin{eqnarray}
\label{eq:max_theta} \ip{q^*}{p^*} & = & \ip{q_0}{p_0} + r_p r_q \cos(\phi-(\theta_p + \theta_q)) + r_p\norm{q_0} \cos (\phi-\theta_p) + r_q \norm{p_0} \cos (\phi - \theta_q) \nonumber \\
\label{eq:max_max} & \leq & \max_{r_p, r_q, \theta_p, \theta_q} \ip{q_0}{p_0} + r_p r_q \cos(\phi-(\theta_p + \theta_q)) + r_p\norm{q_0} \cos (\phi-\theta_p) \nonumber + r_q \norm{p_0} \cos (\phi - \theta_q) \\
\label{eq:cosine_bnd} & \leq & \max_{r_p, r_q}  \ip{q_0}{p_0} + r_p r_q + r_q \norm{p_0} + r_p \norm{q_0} \ \ (\mbox{since $\cos(\cdot) \leq 1$}),  \\
\label{eq:ball_ball_bnd} & \leq & \ip{q_0}{p_0} + R_p R_q + R_q \norm{p_0} + R_p \norm{q_0},
\end{eqnarray}
where the first inequality comes from the definition of $\max$ and the final inequality comes from the fact that $r_p \leq R_p$, $r_q \leq R_q$.
\end{proof}
For the dual-tree search algorithm (Alg. \ref{alg:dual-rec-search}), the maximum-possible inner-product between two tree nodes $Q$ and $T$ is set as
\begin{equation*}
\mathbf{MIP}(Q,T) = \ip{q_0}{p_0} + R_p R_q + R_q \norm{p_0} + R_p \norm{q_0}.
\end{equation*}
It is interesting to note that this upper bound bound reduces to the bound in theorem \ref{thm:single-ball-bnd} when the ball containing the queries is reduced to a single point, implying $R_q = 0$. 

\begin{figure}[!thb]
\centering
\includegraphics[width=0.55\columnwidth,clip=true,trim= 1.0in 3.7in 1.0in 1.7in]{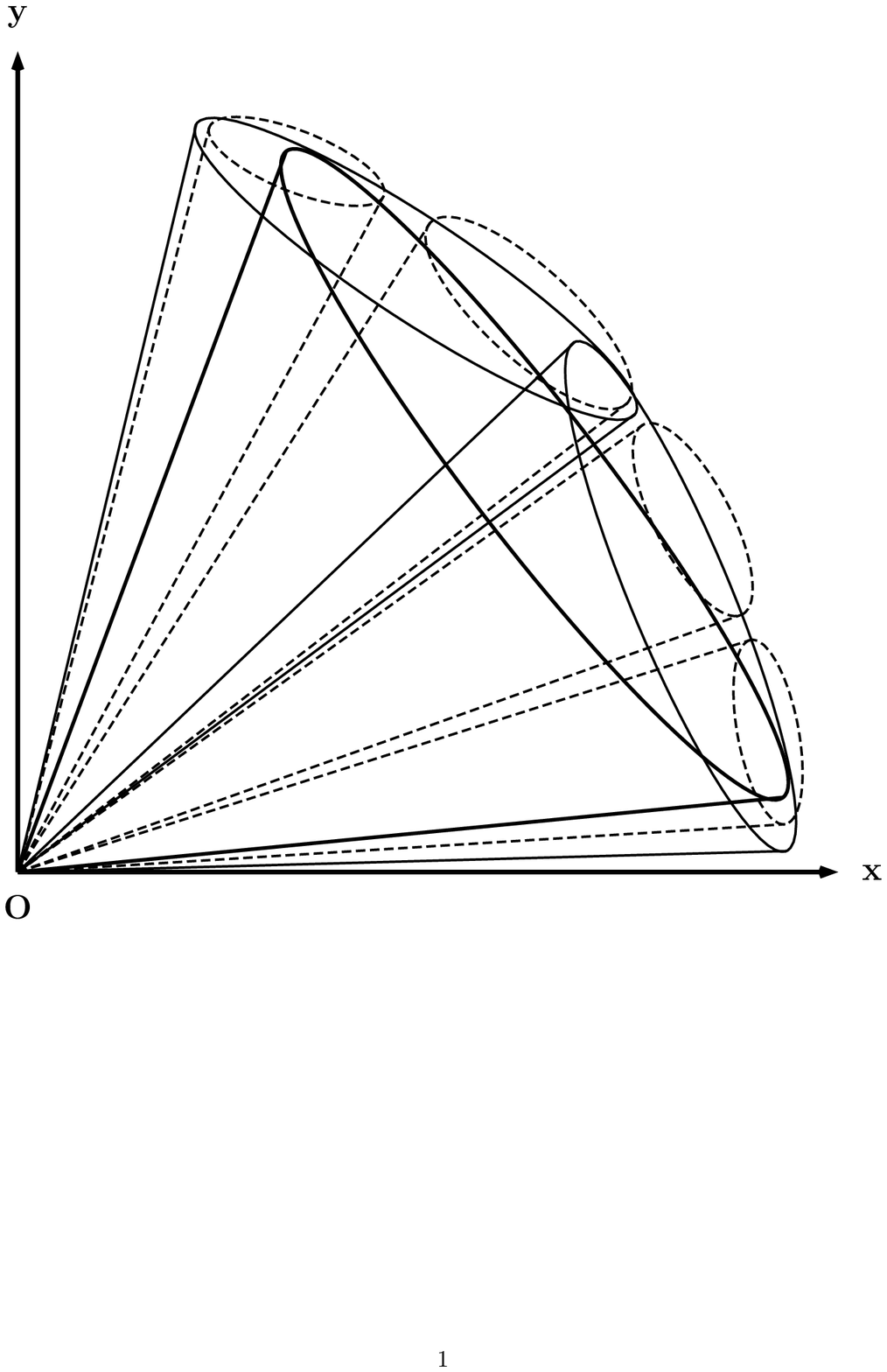}
\caption{\textbf{Cone-tree:} These cones are open cones and only the angle made at the origin with the axis of the cone is bounded for every point in the cone. The norms of the queries are not bounded at all.}
\label{fig:cone-tree}
\end{figure}
\subsection{Cone-trees for Queries}
%
An interesting fact is that in equation \ref{eq:maxip}, the point $p$, where the maximum is achieved, is independent of the norm $||q||$ of the query $q$. Let $\theta_{q,r}$ be the angle between the $q$ and $r$ at the origin, then the task of searching for the maximum inner-product is equivalent to search for a point $p\in S$ such that:
\begin{equation}\label{eq:maxip_mod}
p = \arg\max_{r \in S} \norm{r} \cos \theta_{q,r}.
\end{equation}
This implies that we only care about the direction of the queries irrespective of their norms. For this reason, we propose the indexing of the queries on the basis of their direction (from the origin) to form a {\em cone-tree} (figure \ref{fig:cone-tree}). The queries are hierarchically indexed as (possibly overlapping) open cones. Each cone is represented by a vector, which corresponds to its axis, and an angle, which corresponds to the maximum angle made by any point within the cone with the axis at the origin.

\vspace{-0.1in}
\subsubsection{Cone-tree Construction}
\vspace{-0.05in}
The cone-tree construction is very similar to the ball-tree construction. The only difference is the use of cosine similarity instead of the Euclidean distances for the task of splitting. The cone-tree construction pseudo-code is presented in Figure \ref{fig:cone-tree}.

\begin{figure}[t]
\centering
 \begin{algorithm}[H]
{\small
 \caption{MakeConeTreeSplit(Data $Q$)}
 \begin{algorithmic}
 \label{alg:cone-splits}
 \STATE Pick a random point $\xvec \in Q$
 \STATE $A \leftarrow \arg\min_{\xvec' \in S} \cos \theta_{\xvec, \xvec'}$
 \STATE $B \leftarrow \arg\min_{\xvec' \in S} \cos \theta_{A, \xvec'}$
 \STATE return $(A,B)$.
 \end{algorithmic}
}
\end{algorithm}
 \begin{algorithm}[H]
{\small
 \caption{MakeConeTree(Set of items $S$)}
 \begin{algorithmic}
 \label{alg:cone-tree-construction}
 \STATE Input -- Set $S$
 \STATE Output -- Tree $T$
 \STATE $T.S \leftarrow S$
 \STATE $T.\mu \leftarrow \mbox{mean}(S)$
 \STATE $T.C \leftarrow \min_{p \in S} \cos \theta_{T.\mu,p}$
 \IF { $|S| \leq N_0$ }
    \STATE return $T$
 \ELSE{}
    \STATE $(A,B) \leftarrow \mbox{MakeConeTreeSplit}(S)$
    \STATE $S_l \leftarrow \{p \in S \colon \cos \theta_{A,p} > \cos \theta_{B,p} \}$
    \STATE $S_r \leftarrow S \setminus S_l$
    \STATE $T.\mbox{lc} \leftarrow \mbox{MakeConeTree}(S_l)$
    \STATE $T.\mbox{rc} \leftarrow \mbox{MakeConeTree}(S_r)$
    \STATE return $T$
 \ENDIF
 \end{algorithmic}
}
\end{algorithm}
\caption{\textbf{Cone-tree Construction:} The object $T.S$ denotes the set of points in the node $T$, $T.\mu$ denotes the Euclidean mean of the items in the node $T$ and $T.C$ denotes the cosine of the maximum angle made by any point in the node with $T.\mu$ at the origin. The angle made between any two points $A$ and $B$ at the origin is denoted by $\theta_{A,B}$. }
\label{fig:cone-tree-construction}
\end{figure}
\subsubsection{Cone-Ball Bound}
Since the norms of the queries do not affect the solution in equation \ref{eq:maxip_mod}, we assume that the norms of the queries are all equal to 1 for convenience. 
\begin{figure}[t]
\centering
\includegraphics[width=0.6\columnwidth,clip=true,trim= 1in 3.7in 1in 1.7in]{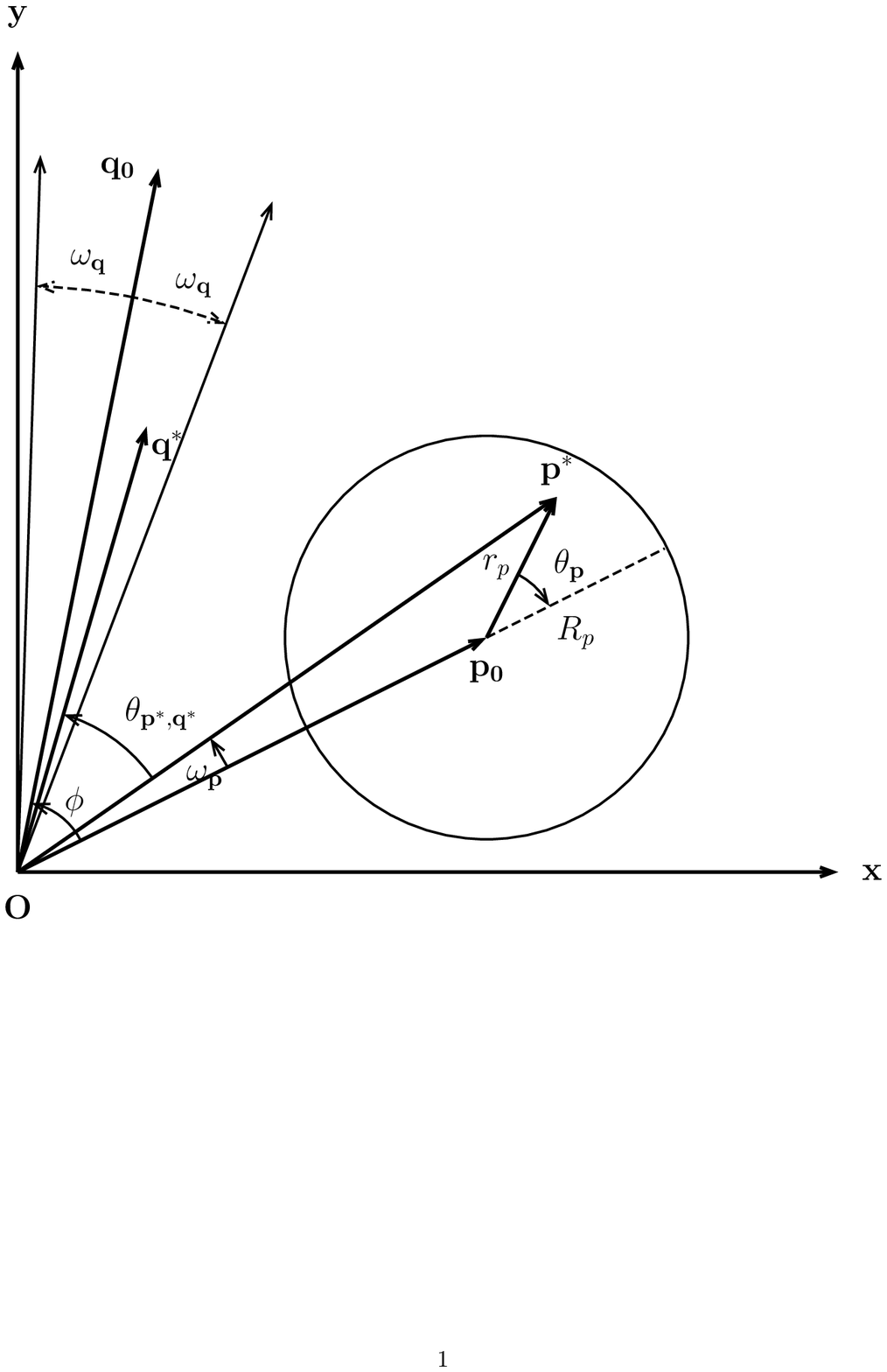}
\caption{Bounding between a ball and a cone}
\label{fig:ball_cone_bound}
\end{figure}

\begin{theorem}\label{thm:cone-ball-bnd}
Given a ball $\ball{p_0}{R_p}$ of points centered at $p_0$ with radius $R_p$ and a cone $\cone{q_0}{\omega_q}$ with the axis of the cone $q_0$ and aperture\footnote{The aperture of the cone is twice the angle made between the axis and the perimeter of the cone.} of $2\omega_q \geq 0$, the maximum possible inner-product between any pair of points $p \in \ball{p_0}{R_p}$, $q \in \cone{q_0}{\omega_q}$ is bounded from above by:
\begin{eqnarray}
\max_{q\in \cone{q_0}{\omega_q}, p\in \ball{p_0}{R_p}} \ip{q}{p} & = & \max_{q\in \cone{q_0}{\omega_q}, p\in \ball{p_0}{R_p}} \norm{p} \cos \theta_{q,p} \nonumber \\
\label{eq:mip_cone_ball_bnd}& \leq & \norm{p_0} \cos (\{|\phi| - \omega_q\}_{+}) + R_p,
\end{eqnarray}
where $\phi$ is the angle made between $p_0$ and $q_0$ at the origin and the function $\{x\}_+ = \max\{x, 0\}$.
\end{theorem}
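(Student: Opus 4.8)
The plan is to avoid re-deriving the geometry from scratch and instead reduce the statement to Theorem \ref{thm:single-ball-bnd}, which already controls the inner product of a single point against a ball. The starting observation is that the objective factors nicely once the two maximizations are exchanged. Since $\ip{q}{p}$ is maximized jointly over $q \in \cone{q_0}{\omega_q}$ and $p \in \ball{p_0}{R_p}$, I would first write
\begin{equation*}
\max_{q\in \cone{q_0}{\omega_q},\, p\in \ball{p_0}{R_p}} \ip{q}{p} = \max_{q\in \cone{q_0}{\omega_q}} \left( \max_{p\in \ball{p_0}{R_p}} \ip{q}{p} \right).
\end{equation*}
For each fixed query direction $q$, which is a unit vector by the normalization assumption preceding the theorem, the inner maximization is exactly the quantity bounded by Theorem \ref{thm:single-ball-bnd}, giving $\max_{p\in \ball{p_0}{R_p}} \ip{q}{p} \leq \ip{q}{p_0} + R_p \norm{q} = \ip{q}{p_0} + R_p$.

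The second step is to carry this now-pointwise bound back through the outer maximization over the cone. Since the $R_p$ term is constant in $q$, this leaves
\begin{equation*}
\max_{q\in \cone{q_0}{\omega_q},\, p\in \ball{p_0}{R_p}} \ip{q}{p} \leq R_p + \max_{q\in \cone{q_0}{\omega_q}} \ip{q}{p_0}.
\end{equation*}
It then remains only to evaluate $\max_{q\in \cone{q_0}{\omega_q}} \ip{q}{p_0} = \norm{p_0}\,\max_{q}\cos\theta_{q,p_0}$ over unit vectors $q$ in the cone. Minimizing the angle $\theta_{q,p_0}$ is an angular clamping problem: the smallest angle that any direction in a cone of half-aperture $\omega_q$ about axis $q_0$ can make with $p_0$ is $\{|\phi| - \omega_q\}_{+}$, where $\phi$ is the angle between $q_0$ and $p_0$. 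If $p_0$'s direction already lies inside the cone (that is, $|\phi| \leq \omega_q$), the best query aligns with $p_0$ and the residual angle is $0$; otherwise the optimal query sits on the boundary of the cone, rotated from $q_0$ toward $p_0$, leaving residual angle $|\phi| - \omega_q$. Since $\cos$ is decreasing on $[0,\pi]$ and the clamped angle stays in that range, this yields $\max_{q}\cos\theta_{q,p_0} = \cos(\{|\phi| - \omega_q\}_{+})$, and substituting gives precisely equation \ref{eq:mip_cone_ball_bnd}.

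The work is genuinely easy once the max-swap is made, so the main thing to get right is the cone-side angular optimization: in particular the case split on whether $p_0$ lies within the cone, and the care that the clamped angle $\{|\phi| - \omega_q\}_{+}$ remains in $[0,\pi]$ so that monotonicity of cosine can be invoked. An alternative, and likely the route matching the style of Theorems \ref{thm:single-ball-bnd} and \ref{thm:ball-ball-bound}, would parameterize the optimal $p^*$ by its distance $r_p \leq R_p$ and offset angle $\theta_p$ exactly as in those proofs, introduce the query's angle within the cone as an additional free variable, apply the triangle inequality of angles to lower-bound $\theta_{q,p^*}$, and then maximize over all free parameters. That direct route reproduces the same bound but is more calculation-heavy and obscures why the $R_p$ term enters purely additively; the reduction above makes that transparent, so I would lead with it.
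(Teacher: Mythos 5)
Your proof is correct, but it takes a genuinely different route from the paper's. The paper re-runs the full geometric machinery of Theorems \ref{thm:single-ball-bnd} and \ref{thm:ball-ball-bound}: it splits into the cases $|\phi| < \omega_q$ and $|\phi| \geq \omega_q$, and in the second case parameterizes the optimal $p^*$ by $r_p$ and $\theta_p$, applies the angle triangle inequality $|\theta_{p^*,q^*}| \geq |\phi - \omega_q - \omega_p|$, expresses $\norm{p^*}$ and $\omega_p$ in terms of $\norm{p_0}$, $r_p$, $\theta_p$, and maximizes over $\theta_p$ using $r_p \leq R_p$ --- exactly the ``calculation-heavy'' alternative you sketch at the end. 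Your max-swap reduction is cleaner: since the queries are unit-normalized (as the paper assumes just before the theorem, and as the first equality in the statement already requires), Theorem \ref{thm:single-ball-bnd} applied pointwise gives $\ip{q}{p_0} + R_p\norm{q} = \ip{q}{p_0} + R_p$ with the $R_p$ term independent of $q$, so the cone side decouples into the elementary angular clamping $\theta_{q,p_0} \geq \{|\phi| - \omega_q\}_+$, which is itself just the angle triangle inequality, and your case split on whether $|\phi| \leq \omega_q$ exactly mirrors the paper's case (i) versus case (ii). Your route buys modularity and makes transparent why $R_p$ enters purely additively; note only that for the upper bound you need just the inequality $\cos\theta_{q,p_0} \leq \cos(\{|\phi|-\omega_q\}_+)$, not exact achievability of the clamped angle, so your equality claim there can be weakened without loss. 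The paper's direct parameterization has the mild virtue of keeping $r_p$ and $\theta_p$ explicit in the style of its other bounds (the Appendix tightens the ball-ball bound by optimizing over precisely such parameters), but for this theorem both arguments land on the identical bound.
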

\begin{proof}
There are two cases to consider here:
\begin{citemize}
\item[(i)] $|\phi| < \omega_q$
\item[(ii)] $|\phi| \geq \omega_q$
\end{citemize}
For case (i), the center $p_0$ of the ball $\ball{p_0}{R_p}$ lies within the cone $\cone{q_0}{\omega_q}$, implying that 
\begin{equation}
\max_{q\in \cone{q_0}{\omega_q}, p\in \ball{p_0}{R_p}} \norm{p} \cos \theta_{q,p} \leq \norm{p_0} + R_p.
\end{equation}
since there could be some query $q^* \in \cone{q_0}{\omega_q}$ which is in the same direction as $p_0$, giving the maximum possible inner-product.

For case (ii), let us assume that $\phi \geq 0$ without loss of generality. Then $\phi \geq \omega_q$. Continuing with the similar notation as in theorem \ref{thm:single-ball-bnd} \& \ref{thm:ball-ball-bound} for the best pair of points $(q^*, p^*)$ as well as the notation from figure \ref{fig:ball_cone_bound}, we can say that
\begin{equation}
|\theta_{p^*, q^*}| \geq |\phi - \omega_q - \omega_p|
\end{equation}
Since $\omega_q$ is fixed, we can say that 
\begin{eqnarray}
\max_{q\in C_{q_0}^{\omega_q}, p\in B_{p_0}^{R_p}} \norm{p} \cos \theta_{q,p}  & \leq &  \norm{p^*} \cos \theta_{q^*, p^*} \mbox{ (by def.)}\nonumber \\
& \leq & \norm{p^*} \cos(\phi - \omega_q - \omega_p).
\end{eqnarray}
Expressing $\norm{p^*}$ and $\omega_p$ in terms of $\norm{p_0}, r_p$ and $\theta_p$, and then subsequently maximizing over $\theta_p$ and using the fact that $r_p \leq R_p$, we get that
\begin{equation}
\label{eq:max_cone_ball_prefinal}
\max_{q\in C_{q_0}^{\omega_q}, p\in B_{p_0}^{R_p}} \norm{p} \cos \theta_{q,p} \leq \norm{p_0} \cos (\phi - \omega_q) + R_p.
\end{equation}
Combining case (i) and (ii), we obtain equation \ref{eq:mip_cone_ball_bnd}. 
%
%
\end{proof}

\section{Experiments and Results}
\label{sec:expts}

In this section, we evaluate the efficiency of the two proposed algorithms \ref{alg:single-tree-search} \& \ref{alg:dual-exact-search}. For the dual-tree algorithm, we use the two variations -- (i) the set of queries indexed as a ball-tree (referred to as Alg. \ref{alg:dual-exact-search}(B)), (ii) the set of queries indexed as a cone-tree (referred to as Alg. \ref{alg:dual-exact-search}(C)). Since we are not aware of any efficient exact method for maximum inner-product search, we compare our proposed algorithms to the linear search algorithm (Alg. \ref{alg:linear-search}). We report the speedup of the proposed algorithms over linear search. Speedup is defined as the ratio of the time taken by the linear search and the time taken by the evaluated algorithm. For the trees, the leaf size $N_0$ can be selected by cross-validation (choosing the leaf size giving the highest speedup). However, for our experiments, we choose a ad hoc value of $N_0 = 20$ for all datasets to demonstrate the gain in efficiency without any expensive cross-validation. 

\paragraph{Datasets.} We use a variety of datasets from different fields of data mining. We use the following collaborative filtering datasets: MovieLens \cite{GroupLens}, Netflix \cite{bennett2007netflix} and the Yahoo! Music \cite{dror2011yahoo} datasets. After the matrix factorization stage, the matrix of item-vectors is used as the reference set and the matrix of user-vectors is used as the set of queries. For text data, we use the LiveJournal blog moods data set \cite{Kim2011pre}. We also use the MNIST digits dataset \cite{lecun2000mnist} for evaluation. We also use three astronomy datasets -- LCDM \cite{lupton2001sdss}, PSF and SJ2. A synthetic data set (U-Rand) of uniformly random points in 20 dimensions is used to evaluate the performance of the tree-based algorithms on data sets without any underlying structure. The rest of the datasets are widely used machine learning data sets from the UCI machine learning repository \cite{ucimlrepository}. The details of the datasets are presented in Table \ref{tab:datasets} and the size of the datasets (in bytes) is presented in figure \ref{fig:data_total_plot}. For the collaborative filtering datasets, there is a clear definition of the reference set (the items) and the set of queries (the users). For the rest of the data sets, we randomly split the datasets into query and reference sets. 
\begin{table}[t]
\centering
\begin{small}
\begin{tabular}{|l|c|c|c|} \hline
\textsc{Dataset} & $\dims$ & $|S|$ & $|V|$ \\ \hline
Bio& 74& 210,409& 75,000 \\ \hline
Corel& 32& 27,749& 10,000 \\ \hline
Covertype& 55& 431,012& 150,000 \\ \hline
LCDM& 3& 10,777,216& 6,000,000 \\ \hline
LiveJournal& 25,327& 121,625& 100,000 \\ \hline
MNIST& 786& 60,000& 10,000 \\ \hline
MovieLens& 51& 3,706& 6,040 \\ \hline
Netflix& 51& 17,770& 480,189 \\ \hline
OptDigits& 64& 1,347& 450 \\ \hline
Pall7& 7& 100,841& 100,841 \\ \hline
Physics& 78& 112,500& 37,500 \\ \hline
PSF& 2& 3,056,092& 3,056,092 \\ \hline
SJ2& 2& 50,000& 50,000 \\ \hline
U-Random& 20& 700,000& 300,000 \\ \hline
Y!-Music& 51& 624,961& 1,000,990 \\ \hline
\end{tabular}
\end{small}
\caption{\textbf{Datasets used for evaluation:} The dimensionality $\dims$ and number of points in the reference set $S$ and the set of queries $V$.}
\label{tab:datasets}
\end{table}
\begin{figure}
\centering
\includegraphics[width=0.45\columnwidth, angle = -90]{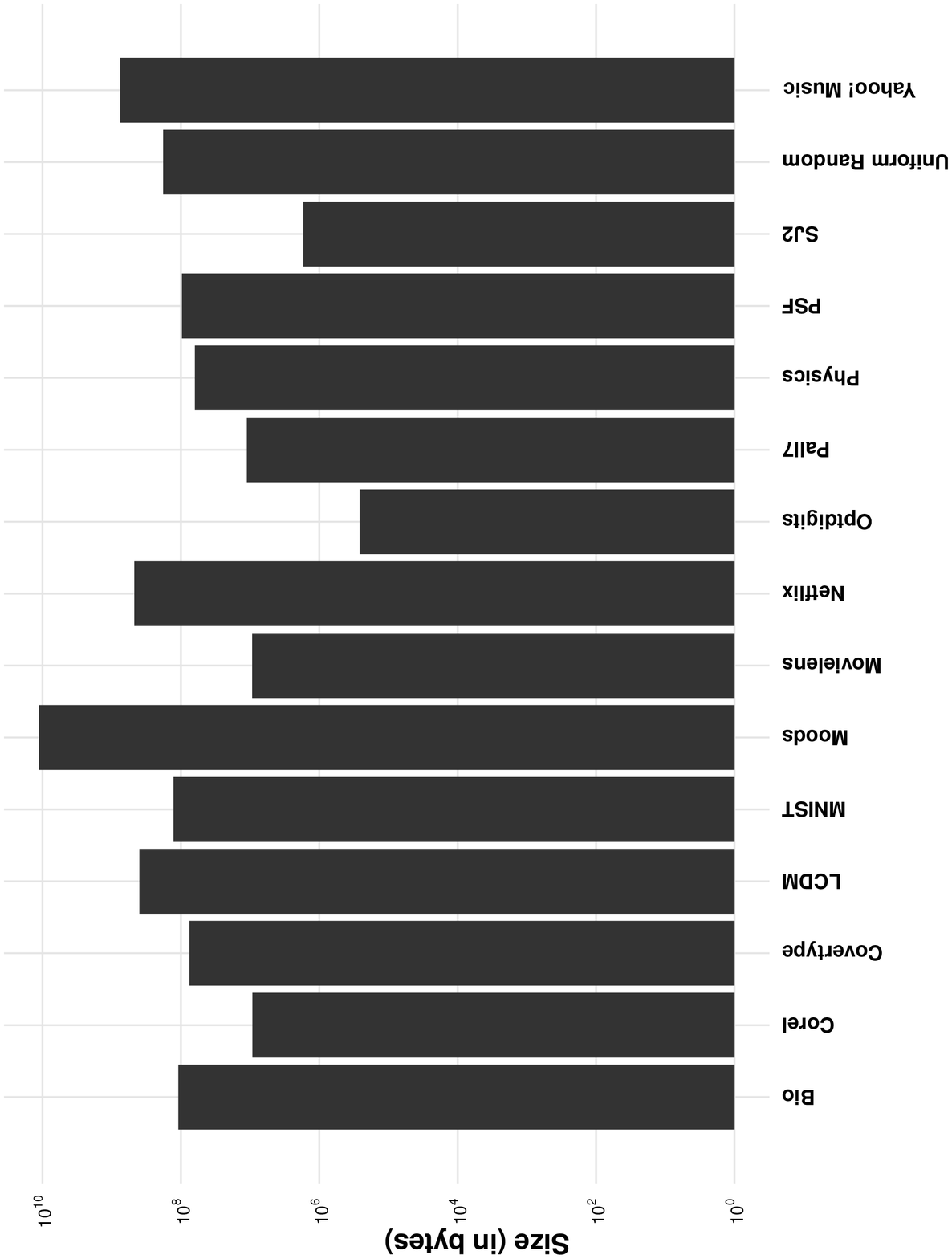}
\caption{\textbf{Total dataset sizes (in bytes):} The combined sizes of the reference set and the query set for each data set are presented in this figure.}
\label{fig:data_total_plot}
\end{figure}
\paragraph{Tree Construction Times.} The tree-building procedure is extremely efficient. We present the tree construction times in table \ref{tab:building-time} and contrast them with the runtime of the linear search algorithm (Alg. \ref{alg:linear-search}). For some of the larger data sets, the extrapolated runtime of Alg. \ref{alg:linear-search} is reported. In the last column, we present the ratio of the tree construction times with the runtimes of Alg. \ref{alg:linear-search}. For algorithm \ref{alg:single-tree-search} \& \ref{alg:dual-exact-search}(B), the tree construction involves building one and two ball-trees respectively. For algorithm \ref{alg:dual-exact-search}(C), the queries are normalized to have unit length for convenience since the norms of the queries do not affect the answers (equation \ref{eq:maxip_mod}). Following the query normalization, two trees are built. We include the query normalization in the tree construction time for completeness. This is the reason for the significant difference between construction times for algorithm \ref{alg:dual-exact-search}(B) and \ref{alg:dual-exact-search}(C). 

The numbers in the last column of table \ref{tab:building-time} (R) show how small the construction times are with respect to the actual linear search. The highest ratio is 0.15 for the OptDigits dataset. This implies that any speedup over 1.18 at search time is enough to compensate for the tree construction time. For most of the datasets, this ratio is much lower. Moreover, this tree building cost is a one time cost. Once the tree is built, it can be used for searching the dataset multiple times.

\begin{table}[t]
\centering
\begin{small}
\begin{tabular}{|l|c|c|c|c|c|} \hline
\textsc{Dataset} & Alg.\ref{alg:single-tree-search} & Alg.\ref{alg:dual-exact-search}(B) & Alg.\ref{alg:dual-exact-search}(C) & Alg.\ref{alg:linear-search} & R(\%) \\ \hline
Bio & 4.3 & 5.7 & 10.6 & 4,028 & 0.25\\ \hline
Corel & 0.2 & 0.27 & 0.66 & 43 & 1.5\\ \hline
Covertype & 5.5 & 7.2 & 14.8 & 14,885 & 0.1\\ \hline
LCDM & 36.7 & 56.46 & 99.3 & 1,984,200 & 0.005\\ \hline
LiveJournal & 2223 & 4073 & 4745 & 517,194 & 0.92\\ \hline
MNIST & 8.06 & 9.1 & 11.38 & 817 & 1.5\\ \hline
MovieLens & 0.03 & 0.08 & 0.27 & 4.62 & 6\\ \hline
Netflix & 0.2 & 8.27 & 33.5 & 1,878 & 1.7\\ \hline
OptDigits & 0.01 & 0.012  & 0.022 & 0.135 & 15\\ \hline
Pall7 & 0.26 & 0.52 & 1.4 & 364 & 0.4\\ \hline
Physics & 2.33 & 3.0 & 5.8 & 1,114 & 0.5\\ \hline
PSF & 9.06 & 18.1 & 34.95 & 282,514 & 0.01\\ \hline
SJ2 & 0.1 & 0.2 & 0.46 & 75 & 0.6\\ \hline
U-Rand & 4.94 & 6.9 & 15.64 & 26,586 & 0.6 \\ \hline
Y! Music & 9.72 & 28.85 & 112.5 & 137,306 & 0.08\\ \hline
\end{tabular}
\end{small}
\label{tab:building-time}
\caption{\textbf{Tree construction time (in seconds) contrasted with the linear search time (in seconds).}}
\end{table}

\paragraph{Search Efficiency.} The speedups over linear search are presented in Table \ref{tab:speedups}. We have reported every dataset we evaluated our algorithm on. Overall, the speedup numbers vary from as low as $1.13$ for the OptDigits dataset to over $10^5$ ($4$ orders of magnitude) for the LCDM and the PSF dataset. An important thing to note here is that for datasets with low speedup (below an order of magnitude) with Alg. \ref{alg:single-tree-search}, the speedup numbers for all three algorithms were pretty low and fairly comparable for all three algorithms. However, even a speedup of 2 is pretty significant in terms of absolute times. For example, for the Yahoo! music dataset, a search speedup of mere $2$ with a tree construction time of $120$ seconds gives a saving of 19 hours of computation time. For most datasets with a high value of speedup for Alg. \ref{alg:single-tree-search}, the speedups for the dual-tree algorithms are also very high. 

\begin{table}[t]
\centering
\begin{small}
\begin{tabular}{|l|c|c|c|} \hline
\textsc{Dataset} & Alg.\ref{alg:single-tree-search} & Alg.\ref{alg:dual-exact-search}(B) & Alg.\ref{alg:dual-exact-search}(C) \\ \hline
Bio & 7,059.62 & 6.55 & 273.52 \\ \hline
Corel & 14.27 & 17.38 & 7.68 \\ \hline
Covertype & 927.51 & 10.05 & 773.34\\ \hline
LCDM & 29,526 & 1,327 & 101,950 \\ \hline
LiveJournal & 28.04 & 10.42 & 15.45 \\ \hline
MNIST & 2.61 & 2.22 & 2.5 \\ \hline
MovieLens & 2.23 & 1.36 & 1.67 \\ \hline
Netflix & 1.98 & 1.92 & 1.84 \\ \hline
OptDigits & 1.13 & 1.10 & 1.10 \\ \hline
Pall7 & 1,020 & 23.14 & 2,285 \\ \hline
Physics & 4.93 & 4.0 & 4.08 \\ \hline
PSF & 61,502 & 96,570 & 125,800 \\ \hline
SJ2 & 544 & 190 & 767 \\ \hline
U-Rand & 3.76 & 3.18 & 3.28 \\ \hline
Y!-Music & 2.11 & 2.09 & 2.16 \\ \hline
\end{tabular}
\end{small}
\label{tab:speedups}
\caption{\textbf{Speedups over linear search for $k=1$.}}
\end{table}

\begin{figure*}[!htb]
\centering
\includegraphics[width=0.85\textwidth, angle = -90]{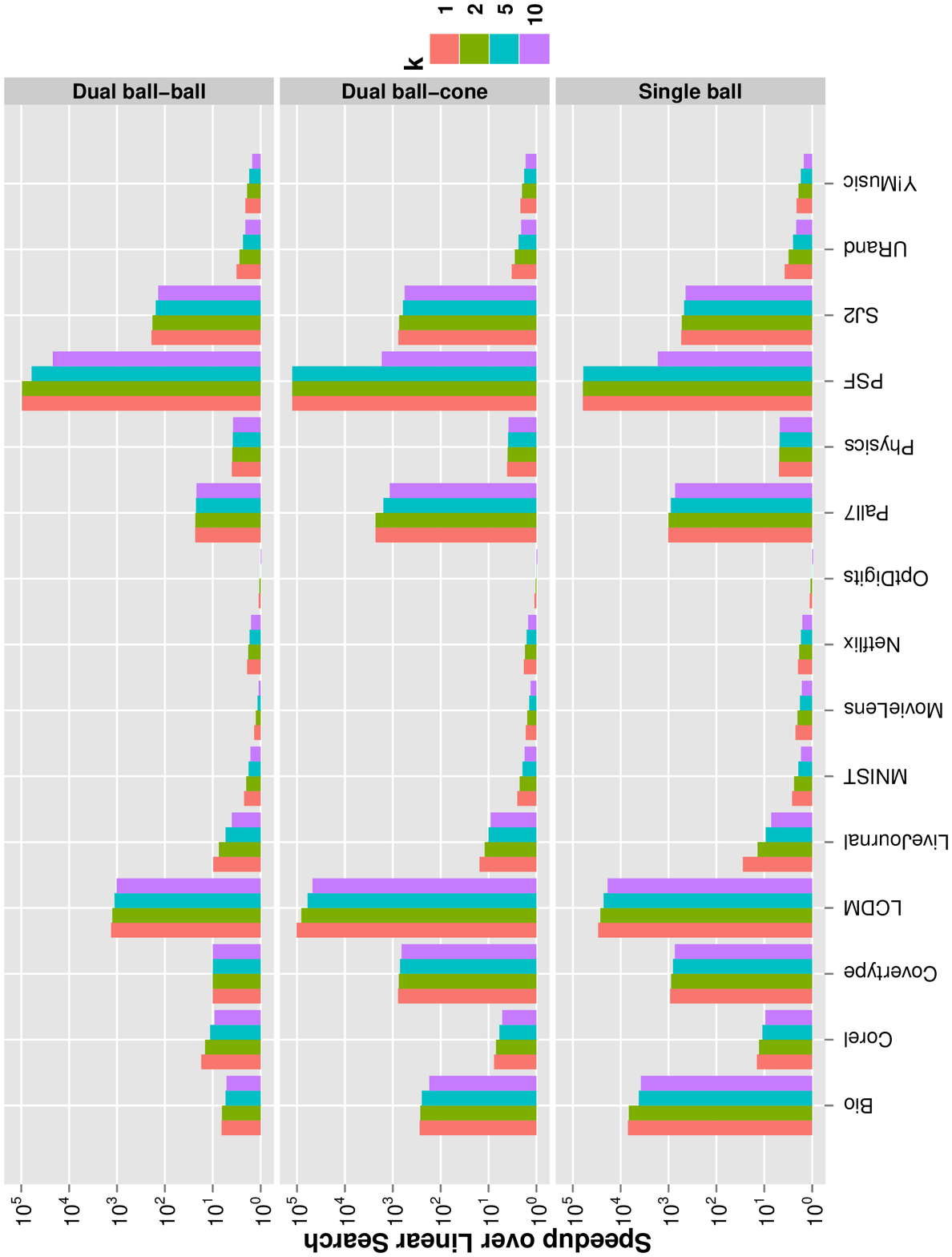}
\vspace{-0.25in}
\caption{\textbf{Speedups over linear search for $k = 1,\ 2,\ 5\ \&\ 10$.}}
\label{fig:speedups_k}
\end{figure*}

There are three important things to note here. Firstly, the dual-tree algorithms (Alg. \ref{alg:dual-exact-search}) do not perform very well if the single-tree algorithms (Alg. \ref{alg:single-tree-search}) does not have a high speedup. This is mostly because the tree is unable to find tight bounds and hence has to travel every branch. The dual-tree scheme loosens the bound to amortize the traversal cost over multiple queries. But if the bounds are bad for algorithm \ref{alg:single-tree-search}, the bounds for the dual-tree are much worse. Hence, the dual-tree algorithm does not show any significant speedup. Secondly, the dual-tree algorithm (especially Alg. \ref{alg:dual-exact-search}(C)) starts outperforming the single-tree algorithm significantly when the set of queries is really large. This is a usual behavior for dual-tree algorithms. The query set has to be large enough for the gains from the amortization of query traversal of the reference tree ({\em RTree}) to outweigh the computational cost of traversing the query-tree ({\em QTree}) itself. Finally, the dual-tree algorithm with ball-trees for the query set is generally significantly slower than the dual-tree with a cone-tree for the queries. There are possibly two possible reasons for that -- (i) The cones provide a tighter indexing of the queries than balls. A single cone can be used to index points in multiple balls which lie in the same direction but have varying norms. (ii) The upper bound for $\mathbf{MIP}(Q,T)$ in equation \ref{eq:cosine_bnd} is fairly loose. We do provide two ways of obtaining tighter bounds in the Appendix, but we have not yet evaluated the algorithm with the new bounding techniques.

We also consider the general problem of obtaining the points in the set $S$ with the $k$ highest inner-product with the query $q$. This is analogous to the $k$-nearest neighbor search problem. We present the speedups of our algorithms over linear search for $k = 1,\ 2,\ 5\ \&\ 10$ in figure \ref{fig:speedups_k}.

\vspace{-0.15in}
\section{Max-kernel Operation with General Kernel Functions}
\label{sec:max_kernel}
\vspace{-0.05in}
In this section, we show a method to apply the proposed algorithms in a inner-product space where the inner-products are defined by a kernel function, but it is not possible to explicitly represent the points in the $\varphi$-space. 

Without an explicit representation, the tree construction has to be modified since there would be no explicit representation of the mean of a set. For a tree node $T$ with the set of point $T.S$, the mean in $\varphi$-space is defined as  $$\mu = \frac{1}{|T.S|}\sum_{p \in T.S} \varphi(p).$$ $\mu$ might not have an explicit representation, but it is possible to compute inner products with $\mu$ as follows: $$\ip{\mu}{\varphi(q)} = \frac{1}{|T.S|} \sum_{p \in T.S} \mathcal{K}(q, p).$$
However, this computation is possibly very expensive (as opposed to the operation in equation \ref{eq:mip_ball_bnd} which is equivalent to a single inner-product). Instead of picking the mean of the set in the $\varphi$-space as the center of the ball, we propose picking the point in the $\varphi$-space which is closest to the mean $\mu$ as the new center. So the new center $p_c$ is given by: 
\begin{eqnarray}
p_c & = & \arg \min_{r \in T.S} \normsq{\varphi(r) - \mu} \nonumber \\ 
\label{eq:varphi-space-center} & = & \arg \min_{r \in T.S} \mathcal{K}(r,r) - \frac{2}{|T.S|} \sum_{r' \in T.S} \mathcal{K}(r', r).
\end{eqnarray}
This operation is quadratic in computation time, but is done at the preprocessing phase to provide efficiency during the search phase. Given this new center $p_c$, we can compute the radius $R_p$ of the ball enclosing the set $T.S$ as follows:
\begin{eqnarray}
R_p^2 & = & \max_{r \in T.S} \normsq{\varphi(r) - \varphi(p_c)} \nonumber \\
\label{eq:varphi-space-radius} & = & \max_{r \in T.S} \mathcal{K}(p_c, p_c) +  \mathcal{K}(r, r) - 2 \mathcal{K}(r, p_c).
\end{eqnarray}

Now given this method of choosing the center and evaluating the radius, a ball-tree can be built in the $\varphi$-space using Algorithm \ref{alg:ball-tree-construction} without ever requiring the explicit representation of the points. Given this ball-tree, the equation \ref{eq:mip_ball_bnd} in theorem \ref{thm:single-ball-bnd} can be modified to this situation as follows: 
\begin{equation} \label{eq:mip_varphi_space_ball_bnd}
\mathbf{MIP}(q, T) = \mathcal{K}(q, p_c) + R_p \sqrt{\mathcal{K}(q,q)},
\end{equation}
where $p_c$ is defined in equation \ref{eq:varphi-space-center} and $R_p$ is defined in equation \ref{eq:varphi-space-radius}. Computing this upper bound is equivalent to a single kernel function evaluation ($\mathcal{K}(q,q)$ be pre-computed before searching the tree). Using this upper bound, the tree-search algorithm (Alg. \ref{alg:single-tree-search}) can be performed in $\varphi$-space without any explicit representation of the points. We will present the evaluation of this method in the longer version of the paper.

Using the same principles, the dual-tree algorithm (Alg. \ref{alg:dual-exact-search}) can also be applied to the $\varphi$-space without any explicit representation of the points. For the dual-tree with ball-tree for the queries, the upper bound on the maximum inner-product between queries in node $Q$ and points in node $T$ in theorem \ref{thm:ball-ball-bound} becomes:
\begin{equation} \label{eq:mip_varphi_space_ball_ball_bnd} \mathbf{MIP}(Q,T) = \mathcal{K}(q_c, p_c) + R_p R_q + R_p \sqrt{\mathcal{K}(q_c, q_c)} + R_q \sqrt{\mathcal{K}(p_c, p_c)}, 
\end{equation}
where $p_c$ and $q_c$ are the chosen ball centers in the $\varphi$-space with radius $R_p$ and $R_q$ respectively.

For queries indexed in a cone-tree, the central axis of the cone can be the point in the $\varphi$-space making the smallest angle with the mean of the set in the $\varphi$-space. Since the queries are supposed to be normalized in the $\varphi$-space, for a query tree node $Q$, the mean of the set $Q.S$ is supposed to be:
$$\mu = \frac{1}{|Q.S|}\sum_{q \in Q.S} \frac{\varphi(q)}{\norm{\varphi(q)}}.$$
So the new central axis $q_c$ of the cone is given by:
\begin{eqnarray}
q_c & = & \arg \max_{q \in Q.S} \frac{\ip{\mu}{\varphi(q)}}{\norm{\mu}\norm{\varphi(q)}} \nonumber \\
 & = & \arg \max_{q \in Q.S} \frac{\sum\limits_{q'\in Q.S} \frac{\mathcal{K}(q', q)}{\sqrt{\mathcal{K}(q', q')}}}{\sqrt{\mathcal{K}(r,r)}}.
\end{eqnarray}
Again, this computation is quadratic in the size of the dataset, but provides efficiency during search time. The cosine of half the aperture of the cone is now given by:
\begin{eqnarray}
\cos \omega_q & = & \min_{q \in Q.S} \frac{\ip{\varphi(q_c)}{\varphi(q)}}{\norm{\varphi(q_c)} \norm{\varphi(q)}} \nonumber \\
& = & \min_{q \in Q.S} \frac{\mathcal{K}(q_c, q)}{\sqrt{\mathcal{K}(q_c, q_c) \mathcal{K}(q,q)}}.
\end{eqnarray}
Given $q_c$ and $\omega_q$, the upper bound in theorem \ref{thm:cone-ball-bnd} for a cone-tree node $Q$ of queries and a ball-tree node $T$ of reference points is given by:
\begin{equation}\label{eq:mip_varphi_space_cone_ball_bnd}
\mathbf{MIP}(Q,T) = \sqrt{\mathcal{K}(p_c, p_c)} \cos (\{|\phi| - \omega_q\}_+) + R_p, 
\end{equation}
where $\phi$ is defined as:
$$ \cos \phi = \frac{\mathcal{K}(p_c, q_c) }{\sqrt{\mathcal{K}(q_c, q_c) \mathcal{K}(p_c, p_c)}}.$$
This bound is very efficient to compute as it only requires a single kernel function evaluation (the terms $\mathcal{K}(p_c, p_c)$ and $\mathcal{K}(q_c, q_c)$ can be pre-computed and stored in the trees).
\vspace{-0.15in}
\section{Conclusion}
\vspace{-0.05in}
\label{sec:conclusions}
We consider the general problem of maximum inner-product search and present three novel methods to solve this problem efficiently. We use the tree data structure and present a branch-and-bound algorithm for maximum inner-product search. We further extend it to the case where the set of queries is very large. We evaluate the proposed algorithms with a variety of datasets and exhibit their computational efficiency.

A theoretical analyses of these proposed algorithms would give us a better understanding of the computational efficiency of these algorithms. 
%
We do not have any rigorous runtime bounds for our algorithm and it would be part of our future work.

\bibliographystyle{abbrv}
\bibliography{../pram_bib_col,../nns,../nbd,../ml,../manifold,../de,../applications}

\appendix
\section{Tighter Bounds with Optimization}

In this section, we present two ways to get a tighter bound on equation \ref{eq:max_max} with respect to $\theta_p$ and $\theta_q$. The maximum inner product bound $\mathbf{MIP}(Q,T)$ between two balls is given as:
\begin{equation}
\label{eq:max_R1}\ip{q^*}{p^*}  \leq  \max_{\theta_p, \theta_q, r_p, r_q} \ip{q_0}{p_0} + r_p r_q \cos(\phi-(\theta_p + \theta_q))  + r_p||q_0|| \cos (\phi-\theta_p) + r_q ||p_0|| \cos (\phi - \theta_q).
\end{equation}

\subsection{Two-variable Optimization}
Assuming that 
$$| \phi - (\theta_p + \theta_q)|  \leq  \frac{\pi}{2}, |\phi - \theta_p|  \leq  \frac{\pi}{2}, |\phi - \theta_q|  \leq  \frac{\pi}{2},$$
we can say that:
\begin{eqnarray}
\label{eq:max_R2}\ip{q^*}{p^*} & \leq & \max_{\theta_p, \theta_q} \ip{q_0}{p_0} + R_p R_q \cos(\phi-(\theta_p + \theta_q)) + R_p||q_0|| \cos (\phi-\theta_p) + R_q ||p_0|| \cos (\phi - \theta_q)  \\ 
& = & f(\theta_p, \theta_q).
\end{eqnarray}
%
%

Now $\frac{\partial f(\theta_p, \theta_q)}{\partial \theta_p} = 0$ and $\frac{\partial f(\theta_p, \theta_q)}{\partial \theta_q} = 0$ gives us the following optimality conditions:
\begin{equation}
\label{eq:opt_theta_p}
\frac{\sin(\phi - (\theta_p+\theta_q))}{\sin (\phi-\theta_p)} = -\frac{||q_0||}{R_q},
\end{equation}
\begin{equation}
\label{eq:opt_theta_q}
\frac{\sin(\phi - (\theta_p+\theta_q))}{\sin (\phi-\theta_q)} = -\frac{||p_0||}{R_p}.
\end{equation}
The second order conditions are the following:
\begin{eqnarray}
\label{opt_2nd}
\frac{\partial^2 f(\theta_p,\theta_q)}{\partial \theta_p^2} & = & - R_p R_q \cos(\phi - (\theta_p+\theta_q)) - R_p ||q_0|| \cos(\phi - \theta_p), \\
\frac{\partial^2 f(\theta_p,\theta_q)}{\partial \theta_q^2} & = & - R_p R_q \cos(\phi - (\theta_p+\theta_q)) - R_q ||p_0|| \cos(\phi - \theta_q),\\
\frac{\partial^2 f(\theta_p,\theta_q)}{\partial \theta_p \partial \theta_q} & = & - R_p R_q \cos(\phi - (\theta_p+\theta_q)),
\end{eqnarray}
which are all $< 0$ for the stated range of $\phi, \theta_p$ and $\theta_q$. So the optimal values obtained from the optimality conditions (equations \ref{eq:opt_theta_p} \& \ref{eq:opt_theta_q}) correspond to the maximum. However, the optimality conditions do not have an analytic solution for $\theta_p$ and $\theta_q$. Hence, any efficient optimization algorithm can be used to solve $\max_{\theta_p, \theta_q} f(\theta_p, \theta_q)$ in the specified range. 
%
%
%
\subsection{One-variable Optimization}
Another approach is the following:
\begin{equation} \label{eq:broken_max}
\begin{split}
\max_{\theta_p,\theta_q, r_p, r_q}  \ip{p_0}{q_0} + r_p r_q \cos(\phi-(\theta_p + \theta_q)) + r_p||q_0|| \cos (\phi-\theta_p) + r_q ||p_0|| \cos (\phi - \theta_q) \ \ \ \ \ \ \ \ \ \ \ \ \ \ \ \ \ \  \\
\ \ \ \ \ \ \ \ \ \ \ \ \ \ \ \ \ \  \leq \max_{\theta_q, r_p, r_q} \max_{\theta_p} p_0^T q_0 + r_p r_q \cos(\phi-(\theta_p + \theta_q)) + r_p||q_0|| \cos (\phi-\theta_p) + r_q ||p_0|| \cos (\phi - \theta_q),
\end{split}
\end{equation}
Since for fixed $\theta_q$, $\omega_q$ is fixed. And for a fixed $\omega_q$, using the single-tree bounding, $\theta_p = (\phi - \omega_q)$. Making this substitution in equation \ref{eq:broken_max}, we get the following optimization task:
\begin{eqnarray}
\ip{p^*}{q^*} & \leq &  \max_{\theta_q, r_p, r_q} \ip{p_0}{q_0} + r_p ||q^*|| + r_q ||p_0|| \cos(\phi - \theta_q) \\
\label{eq:1var_opt}& \leq & \max_{\theta_q} \ip{p_0}{q_0} + R_q ||p_0|| \cos(\phi - \theta_q) + R_p \sqrt{||q_0||^2 + R_q^2 + 2 R_q ||q_0|| \cos \theta_q} \\
&  =  & f(\theta_p), \nonumber
\end{eqnarray}
where the second inequality comes from the assumption that $$|\theta_q| \leq \frac{\pi}{2}, |\phi - \theta_q| \leq \frac{\pi}{2},$$ and the fact that $r_p \leq R_p,\ r_q \leq R_q$. 

The first-order optimality condition gives us the following:
\begin{equation*} \label{eq:opt2_theta_p}
R_p \frac{R_q ||q_0|| \sin \theta_p}{\sqrt{||q_0||^2 + R_q^2 + 2 R_q ||q_0|| \cos \theta_q}} = R_q ||p_0|| \sin (\phi - \theta_q),
\end{equation*}
while the second-order derivative is given by:
\begin{equation*} \label{eq:opt2_2nd}
-R_q R_p ||q_0||  \frac{\cos \theta_p (||q_0||^2 + R_q^2 +  R_q ||q_0|| \cos \theta_q) + R_q ||q_0||}{\left(||q_0||^2 + R_q^2 + 2 R_q ||q_0|| \cos \theta_q \right)^{3/2}} + R_q ||p_0|| \cos(\phi - \theta_q),
\end{equation*}
which is always $< 0$ implying that the optimal $\theta_p$ is the maximum even though the equation \ref{eq:opt2_theta_p} does not give an analytic solution for $\theta_p$. An efficient optimization algorithm can be used to solve this one dimensional optimization problem $\max_{\theta_p} f(\theta_p)$ to obtain tight bounds for $\textbf{MIP}(Q,T)$. 

promising result.

\end{document}